\newtheorem{theorem}{Theorem}%  meant for continuous numbers
\newtheorem{proposition}[theorem]{Proposition}% 
\newtheorem{lemma}[theorem]{Lemma}
\newcommand{\mR}{\mathbf{R}}
\newcommand{\mRt}{\mathbf{R^\prime}}
\newcommand{\mQ}{\mathbf{Q}}
\newcommand{\mQt}{\mathbf{Q^\prime}}
\newcommand{\Ge}{\mathit{\Gamma}}
\newcommand{\Gae}{\mathit{\Gamma}^\prime}
\newcommand{\va}{\mathbf{a}}
\newcommand{\ue}{\mathrm{e}}
\newcommand{\ui}{\mathrm{i}}
\newcommand{\ud}{\mathrm{d}}
\newcommand{\vell}{\boldsymbol\ell}
\begin{document}

\title[Article Title]{Spectral Determinants of Almost Equilateral Quantum Graphs}

%%=============================================================%%
%% GivenName	-> \fnm{Joergen W.}
%% Particle	-> \spfx{van der} -> surname prefix
%% FamilyName	-> \sur{Ploeg}
%% Suffix	-> \sfx{IV}
%% \author*[1,2]{\fnm{Joergen W.} \spfx{van der} \sur{Ploeg} 
%%  \sfx{IV}}\email{iauthor@gmail.com}
%%=============================================================%%

\author[1]{\fnm{Jonathan} \sur{Harrison}}

\author[2]{\fnm{Tracy} \sur{Weyand}}
%\equalcont{These authors contributed equally to this work.}

\affil[1]{\orgdiv{Department of Mathematics}, \orgname{Baylor University}, \orgaddress{\street{1410 S. 4th Street}, \city{Waco}, \state{Texas}, \postcode{76706}, \country{United States}}, ORCHID ID 0000-0002-2590-4503}

\affil[2]{\orgdiv{Department of Mathematics}, \orgname{Rose-Hulman Institute of Technology}, \orgaddress{\street{5500 Wabash Avenue}, \city{Terre Haute}, \state{Indiana}, \postcode{47803},  \country{United States}}, ORCHID ID 0000-0001-9735-5825}

%%==================================%%
%% Sample for unstructured abstract %%
%%==================================%%

\abstract{Kirchoff's matrix tree theorem of 1847 connects the number of spanning trees of a graph to the spectral determinant of the discrete Laplacian \cite{K47}.   Recently an analogue was obtained for quantum graphs  relating the number of spanning trees to the spectral determinant of a Laplacian acting on functions on a metric graph with standard (Neumann-like)
vertex conditions \cite{HWST}.  %56
	This result holds for quantum graphs where the edge lengths are close together.  	A quantum graph where the edge lengths are all equal is called equilateral. % 25
	Here we consider equilateral graphs where we perturb the length of a single edge (almost equilateral graphs).
 We analyze the spectral determinant of almost equilateral complete graphs, complete bipartite graphs, and circulant graphs.   %32
	 This provides a measure of how fast the spectral determinant changes with respect to changes in an edge length.  We apply these results to estimate the width of a window of edge lengths where the connection between the  number of spanning trees and the spectral determinant can be observed.  The results suggest the connection holds for a much wider window of edge lengths than is required in \cite{HWST}. %67
}%150 -250 words %180

\keywords{spectral graph theory, quantum graphs, spectral determinant, spanning trees}

\pacs[MSC Classification]{35A01, 65L10, 65L12, 65L20, 65L70}

\maketitle

\section{Introduction}\label{sec1}

A classical result in spectral graph theory is Kirchhoff's matrix tree theorem \cite{K47}.  The number of spanning trees of a graph is given by the spectral determinant; the product of the non-zero eigenvalues of the graph's discrete Laplacian.  

\begin{theorem}[Kirchhoff's Matrix Tree Theorem]\label{thm KMT}
For a connected graph $G$ with $V$ vertices,
\begin{equation}
\# \textrm{ spanning trees } = \frac{1}{V} \det{'} (\mathbf{L}) =\det(\mathbf{L}[i]) \ ,
\end{equation}
 where $i=1,\dots,V$.  The $V\times V$ matrix $\mathbf{L}=\mathbf{D}-\mathbf{A}$ is the discrete Laplacian with $\mathbf{D}$ the diagonal matrix of vertex degrees and $\mathbf{A}$ the adjacency matrix of the graph.  The $(V-1)\times (V-1)$ matrix $\mathbf{L}[i]$ is produced by deleting the $i$th row and column from $\mathbf{L}$.
\end{theorem}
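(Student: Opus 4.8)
The plan is to split the stated chain of equalities into its combinatorial half, $\det(\mathbf{L}[i]) = \#\text{ spanning trees}$, and its spectral half, $\det(\mathbf{L}[i]) = \frac{1}{V}\det'(\mathbf{L})$, and to treat them separately. For the combinatorial identity I would factor the Laplacian through the signed incidence matrix. First I fix an arbitrary orientation of each edge and let $\mathbf{B}$ be the $V\times E$ matrix with $\mathbf{B}_{ve}=+1$ if $v$ is the head of $e$, $-1$ if $v$ is the tail, and $0$ otherwise. A direct computation of $(\mathbf{B}\mathbf{B}^T)_{uv}$ yields the vertex degree on the diagonal and minus the number of edges joining $u$ and $v$ off the diagonal, so $\mathbf{L}=\mathbf{B}\mathbf{B}^T$ regardless of the chosen orientation. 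Deleting row $i$ from $\mathbf{B}$ to form the reduced incidence matrix $\mathbf{B}[i]$ then gives $\mathbf{L}[i]=\mathbf{B}[i]\mathbf{B}[i]^T$.

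Next I would apply the Cauchy--Binet formula,
$$\det(\mathbf{L}[i]) = \det\!\big(\mathbf{B}[i]\mathbf{B}[i]^T\big) = \sum_{S}\big(\det \mathbf{B}[i]_S\big)^2,$$
where $S$ ranges over all sets of $V-1$ edges and $\mathbf{B}[i]_S$ is the $(V-1)\times(V-1)$ submatrix on the columns indexed by $S$. The crux is the lemma that $\det \mathbf{B}[i]_S=\pm 1$ when $S$ is the edge set of a spanning tree and $0$ otherwise. If $S$ contains a cycle, an alternating sum of the corresponding columns vanishes, forcing determinant $0$; if $S$ is a spanning tree it has a leaf, and expanding the determinant along that leaf's edge reduces the claim to the same statement on a tree with one fewer vertex, so induction forces the value $\pm 1$. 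Substituting back into Cauchy--Binet leaves exactly one summand equal to $1$ for each spanning tree, giving $\det(\mathbf{L}[i]) = \#\text{ spanning trees}$ for every $i$; in particular all these cofactors coincide. I expect this incidence-matrix lemma to be the main obstacle, since it carries the entire combinatorial content; the rest is bookkeeping once the factorization and Cauchy--Binet are in place.

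For the spectral identity I would read off a single coefficient of the characteristic polynomial. Since $\mathbf{L}$ is symmetric and positive semidefinite with the all-ones vector spanning its kernel for a connected graph, $0$ is a simple eigenvalue and $\det'(\mathbf{L})=\prod_{k=2}^{V}\lambda_k$. Writing $\det(x\mathbf{I}-\mathbf{L}) = x\prod_{k=2}^{V}(x-\lambda_k)$, the coefficient of $x^1$ equals $(-1)^{V-1}\prod_{k=2}^{V}\lambda_k = (-1)^{V-1}\det'(\mathbf{L})$. Comparing with the general expansion of a characteristic polynomial, the coefficient of $x^1$ in $\det(x\mathbf{I}-\mathbf{L})$ is also $(-1)^{V-1}$ times the sum of the $(V-1)\times(V-1)$ principal minors of $\mathbf{L}$, namely $(-1)^{V-1}\sum_{i=1}^{V}\det(\mathbf{L}[i])$. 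Equating the two expressions gives $\det'(\mathbf{L}) = \sum_{i=1}^{V}\det(\mathbf{L}[i]) = V\det(\mathbf{L}[i])$, where the last step uses that all the cofactors are equal, and rearranging yields $\det(\mathbf{L}[i]) = \frac{1}{V}\det'(\mathbf{L})$, completing the chain.
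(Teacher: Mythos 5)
Your proof is correct, but there is nothing in the paper to compare it against: the paper states Theorem~\ref{thm KMT} as a classical result, citing Kirchhoff \cite{K47}, and supplies no proof of its own (its contributions concern the quantum-graph analogue, Theorem~\ref{thm:generic spanning trees}). Judged on its own merits, your argument is the standard and complete one: the factorization $\mathbf{L}=\mathbf{B}\mathbf{B}^T$ through the signed incidence matrix, Cauchy--Binet applied to $\mathbf{L}[i]=\mathbf{B}[i]\mathbf{B}[i]^T$, the lemma that the maximal minors of $\mathbf{B}[i]$ are $\pm 1$ on spanning trees and $0$ otherwise, and then the identification of $\sum_{i}\det(\mathbf{L}[i])$ with $(-1)^{V-1}$ times the coefficient of $x$ in $\det(x\mathbf{I}-\mathbf{L})$, which equals $\det{'}(\mathbf{L})$ because $0$ is a simple eigenvalue of the Laplacian of a connected graph. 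Two small wording issues worth tightening: when $S$ contains a cycle, the vanishing linear combination of columns is not literally alternating --- the coefficients are $\pm 1$ chosen according to whether each edge's orientation agrees with a fixed traversal of the cycle; and in the spanning-tree case you expand along the \emph{row} of a leaf vertex distinct from $i$ (which exists since a tree has at least two leaves), not along the leaf's edge. Neither affects the validity of the argument.
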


In \cite{HWST} the authors obtained a similar result for quantum graphs.  A quantum graph is a metric graph along with a self-adjoint operator acting on functions defined on a set of intervals associated with the edges of the graph. Functions on these intervals are related by local boundary conditions at the vertices; see e.g. \cite{Kuc03,BKbook} for an introduction to quantum graphs.  The spectral determinant of the quantum graph Laplacian with standard (Neumann-like) vertex conditions determines the number of spanning trees when the edge lengths are close together; see \cite{HWST} theorem 2, which is reproduced below.
\begin{theorem}\label{thm:generic spanning trees}
	Let $\Gamma$ be a connected metric graph with $V$ vertices and $E$ edges whose edge lengths are in the interval $[\ell,\ell+\delta)$.  If
	\begin{equation}\label{eq: delta constraint}
	\delta< \frac{\ell}{V^V \, 2^{E+V}\sqrt{2EV}}  \ ,
	\end{equation}
	then the number of spanning trees is the closest integer to
	\begin{equation}\label{eq:T thm}
	T_{\Gamma}=  \dfrac{\prod_{v\in\mathcal{V}} d_v}{E \, 2^E \,\ell^{\beta+1}} {\det}'(\mathcal{L}) \ ,
	\end{equation}
	where $\mathcal{V}$ is the set of vertices, $d_v$ is the degree of vertex $v$, $\mathcal{L}$ is the Laplacian with standard vertex conditions, and $\beta=E-V+1$ is the first Betti number of $\Gamma$, the number of independent cycles.
\end{theorem}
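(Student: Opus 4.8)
The plan is to collapse the infinite spectral determinant ${\det}'(\mathcal{L})$ onto a finite determinant carrying only the graph's combinatorial and metric data, and then to extract the spanning-tree count from Kirchhoff's Theorem~\ref{thm KMT}. First I would write the secular equation for the standard Laplacian. On an edge $e=(u,v)$ of length $\ell_e$ the eigenfunction at eigenvalue $k^{2}$ is the combination of $\cos kx$ and $\sin kx$ fixed by its two vertex values; imposing continuity and the vanishing of the sum of the outgoing derivatives at each vertex yields a homogeneous system $h(k)\mathbf{f}=\mathbf{0}$, where $h(k)$ is the $V\times V$ matrix with
\begin{equation}
h(k)_{uu}=-\sum_{e\ni u}k\cot(k\ell_e),\qquad h(k)_{uv}=\sum_{e:\,u\sim v}\frac{k}{\sin(k\ell_e)}.
\end{equation}
The positive spectrum of $\mathcal{L}$ is the zero set of the secular function obtained from $\det h(k)$ after clearing the $\sin(k\ell_e)$ poles, and the small-$k$ expansions $k\cot(k\ell_e)=\ell_e^{-1}+O(k^{2})$ and $k/\sin(k\ell_e)=\ell_e^{-1}+O(k^{2})$ give $h(0)=-\mathbf{L}_w$, the weighted discrete Laplacian with edge weights $\ell_e^{-1}$. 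For an equilateral graph this is simply $\mathbf{L}_w=\ell^{-1}\mathbf{L}$.

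Next I would convert the secular function into the determinant by a Hadamard-product and zeta-regularisation argument: represented as an entire even function of $k$ whose positive zeros are the $\sqrt{\lambda_n}$, its behaviour near the origin delivers ${\det}'(\mathcal{L})$ up to explicit prefactors. Each edge contributes a Neumann-interval factor (on the interval of length $\ell$ one has ${\det}'=2\ell$, which is the source of the $2^{E}$ and the length powers), while the double zero of $\det h$ at $k=0$ supplies the cofactor ${\det}'(\mathbf{L}_w)/V$. Collecting these I expect the equilateral identity
\begin{equation}
{\det}'(\mathcal{L})=\frac{E\,2^{E}\,\ell^{\,\beta+1}}{\prod_{v}d_v}\,T,
\end{equation}
where the weighted matrix-tree theorem (Theorem~\ref{thm KMT} applied to $\mathbf{L}_w=\ell^{-1}\mathbf{L}$) rewrites ${\det}'(\mathbf{L}_w)$ as $V\,\ell^{-(V-1)}T$ with $T$ the number of spanning trees. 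This is precisely \eqref{eq:T thm} with $T_{\Gamma}=T$, so the claim is exact for the equilateral graph.

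Finally I would treat lengths in $[\ell,\ell+\delta)$ as a perturbation of the equilateral graph. Since ${\det}'(\mathcal{L})$ is analytic in the edge lengths and, by the reduction above, is governed by ${\det}'(\mathbf{L}_w)=V\sum_{\tau}\prod_{e\in\tau}\ell_e^{-1}$ with $\tau$ ranging over spanning trees, the deviation of $T_{\Gamma}$ evaluated at the true lengths from its equilateral value $T$ is controlled by the variation of this weighted tree sum as each $\ell_e$ moves off $\ell$ by at most $\delta$. Bounding the number of spanning trees together with the relevant matrix norms and length-gradients --- the origin of the combinatorial factors $V^{V}$, $2^{E+V}$ and $\sqrt{2EV}$ --- should yield $|T_{\Gamma}-T|<\tfrac12$ exactly when $\delta$ obeys \eqref{eq: delta constraint}, so that $T_{\Gamma}$ rounds to the integer $T$.

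The secular-determinant setup and the limit $h(0)=-\mathbf{L}_w$ are routine. The main obstacle is twofold: making the zeta-regularisation of the bulk spectrum rigorous so that the universal prefactor is pinned down as exactly $E\,2^{E}\,\ell^{\beta+1}/\prod_v d_v$, and, more seriously, producing the explicit uniform perturbation estimate that delivers the sharp $\delta$-window. I expect the latter to be the harder step, since it demands quantitative control of the spectral determinant across the entire range of admissible edge lengths rather than at the equilateral point alone.
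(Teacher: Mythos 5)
First, a point of orientation: the present paper does not prove Theorem \ref{thm:generic spanning trees} at all --- it is reproduced verbatim from \cite{HWST} --- so the only relevant machinery supplied here is Friedlander's formula (Theorem \ref{thm: Fr}) and the equilateral identity (Theorem \ref{thm: EST}). Your first two steps essentially re-derive those two known results: the vertex secular matrix $h(k)$ with $h(0)=-\mathbf{R}$, a Hadamard/zeta argument for the prefactor $E\,2^E\ell^{\beta+1}/\prod_v d_v$, and the weighted matrix-tree theorem giving $\det{}'(\mathbf{R})=V\sum_{\tau}\prod_{e\in\tau}\ell_e^{-1}$. That skeleton is sound, and you could bypass the delicate regularization entirely by citing Theorem \ref{thm: Fr}; but it establishes only the equilateral statement, i.e.\ Theorem \ref{thm: EST}, not the theorem at hand.

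The genuine gap is your third step, which is where the entire content of the theorem lives. You assert that bounding spanning-tree counts, matrix norms and length-gradients ``should yield $|T_\Gamma-T|<\tfrac12$ exactly when $\delta$ obeys \eqref{eq: delta constraint}'', but no such bound is produced, and the constants $V^V$, $2^{E+V}$, $\sqrt{2EV}$ are never derived --- they are only named. (Moreover ``exactly when'' is the wrong target: the constraint is sufficient, not necessary, and the whole point of this paper is that the window is far from sharp.) To close the gap along your own route, combine Theorem \ref{thm: Fr} with the weighted matrix-tree theorem to obtain the exact polynomial identity
\begin{equation*}
T_{\Gamma}\;=\;\frac{\ell_{tot}\prod_{e\in\mathcal{E}}\ell_e}{E\,V\,\ell^{\beta+1}}\,\det{}'(\mathbf{R})\;=\;\frac{\ell_{tot}}{E\,\ell^{\beta+1}}\sum_{\tau}\prod_{e\notin\tau}\ell_e \ ,
\end{equation*}
where $\tau$ runs over spanning trees and each complementary product has exactly $\beta$ factors. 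Since every $\ell_e\in[\ell,\ell+\delta)$, the right-hand side is monotone in each length, so $T\le T_{\Gamma}<T\left(1+\delta/\ell\right)^{\beta+1}$, where $T$ is the number of spanning trees; it therefore suffices to verify $T\left[(1+\delta/\ell)^{\beta+1}-1\right]<\tfrac12$, which follows comfortably from \eqref{eq: delta constraint} using $T\le V^{V-2}$ (Cayley's bound), $\beta+1\le E$, and $(1+x)^{E}-1\le Ex(1+x)^{E}<2Ex$ for $x<2^{-E}$. Without some such explicit uniform estimate, your proposal proves only the equilateral case and leaves the actual assertion of the theorem unproved.
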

The spectral determinant can also be used to compute magnetization and transport properties of the network \cite{PM97,PM99}.
It was formulated for Laplace and Schr\"odinger operators on metric graphs in \cite{D00,D01,Fr}, via periodic orbits or pseudo-orbits in \cite{Aetal00, BHJ12, WGTR}, and using the zeta function in \cite{HK11,HKT12,HWK16}. 

To obtain theorem \ref{thm:generic spanning trees}, the edge lengths of the metric graph are tightly constrained so they are all close to $\ell$.  We call a metric graph where the edge lengths are all equal an equilateral graph.  Hence the theorem \ref{thm:generic spanning trees} holds for metric graphs that are close to an equilateral graph.   However, heuristically, the spectral determinant of a quantum graph should determine the number of spanning trees for graphs where the edge lengths are less tightly constrained.  

For instance, when computing the spectral statistics of large quantum graphs (seen in applications to quantum chaos), a simplification
that is frequently used \cite{SS00,T01,BSW02,BSW03}
is to instead compute the spectral statistics of the eigenphases of the unitary scattering matrix of the graph on the unit circle.
In both cases the spectrum is unfolded to have a mean spacing of one.  
The unfolded spectrum of an equilateral graph is the periodic extension of the unfolded spectrum of the eigenphases of the scattering matrix.  The explanation for this correspondence is that roots of the secular equation,
\begin{equation}
	\det (\mathbf{I}-\mathbf{S}\ue^{\ui k \mathbf{M}}) = 0,
\end{equation}  
are the square roots of the eigenvalues of the Laplace operator on the metric graph, where $\mathbf{S}$ is the unitary graph scattering matrix and $\mathbf{M}$ is the corresponding diagonal matrix of edge lengths.   Changing the edge lengths without changing the total length of the graph jiggles the roots locally without changing the mean spacing, level repulsion or statistical properties.  Consequently, we hypothesize that, the spectral determinant should be somewhat robust to such changes.   This would also imply that the spanning tree formula (\ref{eq:T thm}) is robust and could be accurate for graphs where the mean spacing of edge lengths is $\ell$ even if there is variation in the individual edge lengths. 

The connection between eigenphase statistics and spectral statistics was made precise in \cite{BW10} for the nearest neighbor spacing statistics when the set of edge lengths of the quantum graph are incommensurate.  If the edge lengths are rationally related, for example in an equilateral graph, the nearest neighbor spacings in the graph spectrum are highly degenerate.  The relationship between the two forms of graph quantization, a self-adjoint operator on a network of intervals versus a collection of unitary vertex scattering matrices, was discussed in \cite{B08,H24}.

In this article we approach the relationship between the spectral determinant of a quantum graph and the number of spanning trees from a different direction.  We consider three classes of equilateral graphs where we can compute the spectral determinant analytically; complete graphs, complete bipartite graphs, and connected circulant graphs of prime order where the maximum multiplicity of eigenvalues of the adjacency matrix is two (which is the generic case). We then consider perturbing the length of a single edge to obtain an almost equilateral graph, and we ascertain how the spectral determinant and the number of spanning trees changes under the perturbation.  This provides a measure of how fast $T_\Gamma$ varies as one changes an edge length.  In each case, we see that $T_\Gamma$ can be expected to provide an accurate determination of the number of spanning trees for graphs where the edge lengths vary much more widely than is implied by the constraint (\ref{eq: delta constraint}) on $\delta$ imposed in theorem \ref{thm:generic spanning trees}.  For example, the bound $\delta< \ell / V^V$ can be expected to provide a sufficiently narrow window of edge lengths for $T_{\Gamma}$ to determine the number of spanning trees for these classes of graphs.
Perturbations of graph operators have been considered in a number of settings, for example graph surgery \cite{BKKM}, the effect of shrinking an edge length of a quantum graph to zero in \cite{BLS}, and the effect of changing edge lengths on the spectral gap \cite{BL}.  

The article is laid out as follows.  In section \ref{sec: back} we introduce notation for graphs and quantum graphs.  Sections \ref{sec: transitive graphs} and \ref{sec: circulant} analyze the spectral determinants of almost equilateral complete graphs, complete bipartite graphs, and certain connected circulant graphs.  In section \ref{sec: spanning trees} we apply the results for the spectral determinants to see how the function $T_{\Gamma}$ behaves under perturbation of an edge length. For completeness, the appendices describe how individual eigenvalues of almost equilateral complete and complete bipartite graphs change under perturbation of an edge length.   

\section{Background}\label{sec: back}

A \textit{graph} $G$ consists of a set of vertices $\mathcal{V}$ and a set of edges $\mathcal{E}$ such that each edge $e = (u,v) \in \mathcal{E}$ connects a pair of vertices $u, v \in \mathcal{V}$; see  figures \ref{fig: complete bipartite} and \ref{fig: circulant} for examples. Two vertices are \textit{adjacent} if there is an edge $(u,v) \in \mathcal{E}$, denoted $u\sim v$. The number of vertices is $V = |\mathcal{V}|<\infty$, and $E = |\mathcal{E}|<\infty$ is the number of edges.  The \textit{degree} of vertex $v$ is the number of vertices adjacent to $v$, denoted $d_v$.  A graph is $k$-\emph{regular} if $d_v=k$ for all $v\in \mathcal{V}$.
Throughout the paper, we assume that graphs are \textit{connected}, so there is a path between every pair of vertices. We also assume the graphs are \emph{simple}, with no loops or multiple edges.  
The \textit{first Betti number} is $\beta = E-V+1$, the number of independent cycles of $G$. A \textit{tree} is a connected graph with no cycles ($\beta = 0$). A \textit{spanning tree} of $G$ is a subgraph that is a tree and contains all vertices of $G$.  
Additionally, when we reference sets of vectors and specific elements of vectors, we use the convention that $\mathbf{v}_j$ denotes the $j$th vector in the set $\{\mathbf{v}_j\}_{j=1}^n$ whereas $[\mathbf{v}]_j$ denotes the $j$th element of the vector $\mathbf{v}$. 

We now define some classes of graphs that will be used in the article.  The \emph{fully connected graph} on $n$ vertices $K_n$ is the graph where every pair of vertices is connected by an edge so $(u,v)\in \mathcal{E}$ for all $u,v\in \mathcal{V}$ with $u\neq v$.  The \emph{complete bipartite graph} $K_{m,n}$ is a graph where the set of vertices is divided into two disjoint sets, $\mathcal{V}=\mathcal{A}\cup \mathcal{B}$ with $\mathcal{A}\cap \mathcal{B}=\emptyset$.  Every vertex of $\mathcal{A}$ is adjacent to every vertex of $\mathcal{B}$ and there are no edges connecting pairs of vertices in $\mathcal{A}$ or pairs of vertices in $\mathcal{B}$.  The number of vertices in $\mathcal{A}$ is $m=|\mathcal{A}|$ and $n=|\mathcal{B}|$.  Figure \ref{fig: complete bipartite} shows the complete bipartite graph $K_{3,5}$.
A \emph{circulant graph} on $n$ vertices is denoted $C_n(\va )$ where $\va=(a_1,\dots a_p)$ with $a_1 < a_2 < \dots <a_p \leq n/2$.  Then two vertices $i,j\in \{ 0,\dots, n-1 \}$ are connected if $|i-j|=\pm a_k (\textrm{mod } n)$ for some $k\in \{1,\dots,p\}$.  If $a_p<n/2$, then the circulant graph is $2p$-regular.  A circulant graph is connected if and only if $\mathrm{gcd} (n,a_1,\dots ,a_p)=1$; see \cite{D86} corollary 1. Figure \ref{fig: circulant} shows the connected circulant graph $C_{17}(2,5)$.

\begin{figure}[tbh]
	\centering
	\includegraphics[width=5cm]{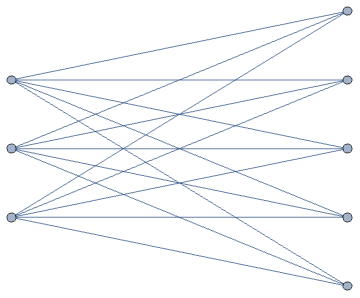}
\caption{\small{The complete bipartite graph $K_{3,5}$.}}\label{fig: complete bipartite}
\end{figure} 

\begin{figure}[tbh]
	\centering
	\includegraphics[width=5cm]{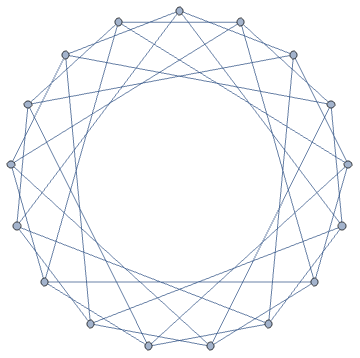}
	\caption{\small{The circulant graph $C_{17}(2,5)$.}}\label{fig: circulant}
\end{figure}

\subsection{Quantum Graphs}

A \textit{metric graph} $\Gamma$ is a graph where each edge $e\in \mathcal{E}$ is assigned a length $\ell_e > 0$ and we define a vector of edge lengths $\vell=(\ell_1,\dots,\ell_E)$ for some ordering of the edges of the graph.  The \textit{total length} of a metric graph is $\ell_{tot}=\sum_{e\in \mathcal{E}} \ell_e$.  We view the edge $e = (u,v) \in \mathcal{E}$ as the interval $[0, \ell_e]$ with $x_e = 0$ at $u$ and $x_e = \ell_e$ at $v$; the choice of orientation will not affect our results. This allows us to define a function $f$ on $\Gamma$ as a collection of functions, $\{f_e\}_{e\in\mathcal{E}}$, such that the function $f_e$ is defined on the interval $[0, \ell_e]$. A metric graph is \textit{equilateral} if all the edges have the same length, $\vell=(\ell,\dots,\ell)$. We use $\Ge$ to denote an equilateral graph.  If $\Ge$ is equilateral with edge lengths $\ell$, then we will say that $\Gae$ is an \textit{almost equilateral graph} if the length of one edge $e$ of $\Ge$ is perturbed so $\ell_e=\ell + \epsilon$. 

A \textit{quantum graph} is a metric graph $\Gamma$ along with a self-adjoint differential operator. Our focus in this paper is on the \textit{Laplace operator}, which acts as $-\frac{\mathrm{d}^2}{\mathrm{d}x_e^2}$ on functions that are defined on the interval $[0, \ell_e]$. We consider standard Neumann-Kirchhoff conditions at the vertices:
\begin{equation}\label{eq: vertex conditions}
\left\{\begin{array}{l}
f\mbox{ is continuous on } \Gamma \mbox{ and}\\
\sum_{e\in \mathcal{E}_v} \frac{\mathrm{d}f_e}{\mathrm{d}x_e}(v) = 0 \mbox{ at each vertex } v,
\end{array}\right.
\end{equation}
where $\mathcal{E}_v$ denotes the edges containing the vertex $v$ and the derivative $\frac{\mathrm{d}f_e}{\mathrm{d}x_e}(v)$ is taken in the outgoing direction at $v$. The second Sobolev space on $\Gamma$ is the direct sum of second Sobolev spaces on the set of intervals:
\begin{equation}
H^2(\Gamma) = \bigoplus_{e\in\mathcal{E}} H^2([0, \ell_e]) \ .
\end{equation}
 The domain of the Laplace operator is the set of functions $f \in H^2(\Gamma)$ that satisfy \eqref{eq: vertex conditions}. With these conditions, the Laplace operator is self-adjoint with infinitely many non-negative eigenvalues; see e.g. \cite{KS99}. We use the notation $\mathcal{L}$ and $\mathcal{L^\prime}$ for the Laplace operator on an equilateral graph $\Ge$ and a corresponding almost equilateral graph $\Gae$ respectively.

The spectral determinant of a quantum graph is formally the product of the non-zero eigenvalues of the graph Laplacian,
\begin{equation}
\det{'}(\mathcal{L})(\vell) = \prod_{j=0}^\infty{'} \lambda_j \ ,
\end{equation}
where the prime indicates that any eigenvalues of zero are omitted from the product.  A regularized spectral determinant can be obtained from the spectral zeta function,
\begin{equation}
\zeta (s) = \sum_{j=0}^\infty{'} \lambda_j^{-s} \ ,
\end{equation}
via $\det{'}(\mathcal{L})(\vell) = \exp \big( -\zeta'(0) \big)$.   Then, using the Dirichlet-to-Neumann operator, Friedlander \cite{Fr} formulates the spectral determinant in the following theorem.
\begin{theorem}\label{thm: Fr}
	Given a connected metric graph,
	\begin{equation}\label{eq: spec det formula}
	\det{'}(\mathcal{L})(\vell) = \frac{2^E}{V}\frac{\ell_{tot}\prod_{e\in\mathcal{E}}\ell_e}{\prod_{v\in\mathcal{V}} d_v}\det{'}(\mR),
	\end{equation}
	where $\mR$ is the $V \times V$ matrix,
	\begin{equation}
	[\mR]_{u,v} = \begin{cases}
	\sum_{w\sim v} \ell_{(w,v)}^{-1}, & \mbox{ if } u=v\\
	-\ell_{(u,v)}^{-1}, & \mbox{ if } u\sim v\\
	0, & \mbox{ otherwise}.
	\end{cases}
	\end{equation}
\end{theorem}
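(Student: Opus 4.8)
The plan is to identify $\mR$ with the graph's Dirichlet-to-Neumann (DtN) map at zero energy and then to relate $\det{'}(\mathcal{L})$ to $\det{'}(\mR)$ by a cut-and-paste (Burghelea--Friedlander--Kappeler, BFK) formula in which $\Gamma$ is severed at all of its vertices. First I would build the DtN map. For $\lambda=k^2$ outside the Dirichlet spectrum of every edge, I solve $-u''=k^2u$ on each interval $[0,\ell_e]$ with prescribed endpoint values and assemble the outgoing derivatives at the vertices; this produces a $V\times V$ matrix $\Lambda(k)$ with
\[
[\Lambda(k)]_{v,v}=-\sum_{w\sim v}k\cot(k\ell_{(w,v)}),\qquad [\Lambda(k)]_{u,v}=\frac{k}{\sin(k\ell_{(u,v)})}\ \ (u\sim v).
\]
The vertex conditions \eqref{eq: vertex conditions} say exactly that the assembled outgoing derivatives vanish, so $\lambda$ is an eigenvalue of $\mathcal{L}$ (away from the edge Dirichlet spectrum) precisely when $\Lambda(k)$ is singular. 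Since $k\cot(k\ell)\to 1/\ell$ and $k/\sin(k\ell)\to 1/\ell$ as $k\to 0$, I get $\Lambda(0)=-\mR$; equivalently $\mR$ is the zero-energy DtN operator, a positive-semidefinite weighted graph Laplacian whose kernel is spanned by constants, matching $\ker\mathcal{L}$. The overall sign is a matter of the outgoing/incoming convention and does not affect $\det{'}$.

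Next I would record the single-edge input. The Dirichlet Laplacian on $[0,\ell_e]$ has eigenvalues $(n\pi/\ell_e)^2$, and the standard zeta computation using $\zeta_R(0)=-\tfrac12$ and $\zeta_R'(0)=-\tfrac12\ln(2\pi)$ gives the regularized value $\det(\mathcal{L}^{\mathrm D}_e)=2\ell_e$. Taking the product over edges yields the factor $2^E\prod_{e\in\mathcal{E}}\ell_e$ appearing in \eqref{eq: spec det formula}, and the edge Dirichlet spectrum (which can also belong to the spectrum of $\mathcal{L}$) is accounted for here rather than by the zeros of $\Lambda$.

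The heart of the argument is the gluing identity. Cutting $\Gamma$ at every vertex turns it into the disjoint union of the Dirichlet intervals, and a BFK-type formula gives
\[
\det{'}(\mathcal{L})=c_\Gamma\prod_{e\in\mathcal{E}}\det(\mathcal{L}^{\mathrm D}_e)\,\det{'}(\mR),
\]
so it remains to show $c_\Gamma=\ell_{tot}/(V\prod_{v}d_v)$. I would split $c_\Gamma$ into two pieces. The zero modes of $\mathcal{L}$ and of $\mR$ are both constant, and the version of the gluing formula valid in the presence of a one-dimensional kernel contributes the ratio of squared norms of these (un-normalized) zero modes, namely $\|\mathbb{1}\|^2_{L^2(\Gamma)}/\|\mathbb{1}\|^2_{\ell^2(\mathcal V)}=\ell_{tot}/V$. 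The remaining factor is the BFK anomaly, which is local at the interface: at a vertex where $d_v$ intervals are glued the anomaly equals $1/d_v$, so the product over vertices gives $\prod_v 1/d_v$. Multiplying the three contributions reproduces \eqref{eq: spec det formula}.

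The main obstacle is this last step: making the gluing rigorous through the zeta regularization when the DtN map is meromorphic (it has poles at the edge Dirichlet eigenvalues) and simultaneously singular at $\lambda=0$, and pinning down the per-vertex anomaly $1/d_v$. I would compute the anomaly on the local model of a single degree-$d$ junction; the equilateral star $S_d$ gives a clean check, since its nonzero spectrum is explicit, consisting of $(n\pi/\ell)^2$ with multiplicity one together with $\big((2n+1)\pi/2\ell\big)^2$ with multiplicity $d-1$, and the zeta-regularized product evaluates to $\det{'}(\mathcal{L}_{S_d})=2^d\ell$. This matches \eqref{eq: spec det formula} and fixes the junction anomaly at $1/d$, while the interior case $d=2$ can be verified directly by cutting an interval at one point, where the anomaly is $\tfrac12$.
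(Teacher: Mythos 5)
Your strategy is the right one, and it is in fact the approach behind the paper's source: the paper does not prove theorem \ref{thm: Fr} at all, but quotes it from Friedlander \cite{Fr}, whose argument is precisely Dirichlet-to-Neumann plus gluing. Your individual ingredients are correct: $\mR$ is (up to an irrelevant sign convention) the zero-energy DtN map, the regularized edge determinant is $\det(\mathcal{L}_e^{D})=2\ell_e$, and your constants are consistent --- the zero-mode ratio $\ell_{tot}/V$ times the per-vertex factors $1/d_v$ times $2^E\prod_{e}\ell_e$ reproduces \eqref{eq: spec det formula}, and your equilateral star and cut-interval checks are accurate. The genuine gap sits exactly at the step you yourself call the heart of the argument: you invoke ``a BFK-type formula'' with an undetermined constant $c_\Gamma$ and then pin it down by asserting that the anomaly is \emph{local} and depends only on the vertex degree, so that one model computation per degree suffices. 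That locality principle cannot be quoted here. The Burghelea--Friedlander--Kappeler locality theorem concerns cutting along a hypersurface of positive dimension, where the anomaly is an integral of locally computable symbol data of a pseudodifferential DtN operator; on a graph the interface is a finite set of points, the DtN map is a meromorphic finite matrix, and that machinery degenerates. Without an independent proof that $c_\Gamma$ factorizes over vertices into a quantity depending only on $d_v$ (and not, say, on the incident edge lengths), your equilateral star computation fixes the constant only for equilateral stars; the multiplicativity claim is essentially equivalent to the theorem being proven. The same applies to the multiplicity bookkeeping at the edge Dirichlet eigenvalues, which you flag as an obstacle but do not resolve: one must show that the poles of $\det\Lambda(\lambda)$ cancel the zeros of $\prod_{e}\det(\mathcal{L}_e^{D}-\lambda)$ to exactly the right order, including at energies where $\mathcal{L}$ has eigenfunctions vanishing at every vertex.

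Both gaps are closed simultaneously by the argument Friedlander actually uses, which is the natural repair of your sketch: consider
\begin{equation*}
F(\lambda)=\frac{\det(\mathcal{L}-\lambda)}{\prod_{e\in\mathcal{E}}\det\bigl(\mathcal{L}_e^{D}-\lambda\bigr)\,\det\Lambda(\lambda)}\ ,
\end{equation*}
prove via the multiplicity matching that $F$ is entire and nonvanishing, hence constant (both numerator and denominator are entire of order $1/2$ in $\lambda$, and a nonvanishing entire function of order less than one is constant), and then evaluate the constant as $\lambda=-k^2\to-\infty$: there the DtN matrix is asymptotically diagonal, with diagonal entries of magnitude $d_v k$ and exponentially small off-diagonal entries, so $|\det\Lambda|\sim k^V\prod_{v\in\mathcal{V}}d_v$. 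This asymptotic comparison is where the factor $\prod_v d_v$ really comes from; no separate locality postulate is needed, and the locality you wanted is a consequence rather than an input. The factor $\ell_{tot}/V$ then emerges at $\lambda=0$ by comparing the simple zeros of the two sides: the zero eigenvalue of $\Lambda(\lambda)$ moves with speed $\langle\mathbf{v}_0,\dot\Lambda(0)\mathbf{v}_0\rangle=\ell_{tot}/V$, where $\mathbf{v}_0$ is the normalized constant vector --- which is the rigorous form of your ratio-of-norms heuristic, since that pairing computes the squared $L^2(\Gamma)$ norm of the harmonic (constant) extension of $\mathbf{v}_0$.
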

The spectral determinant can also be obtained from the zeta function using a contour integral approach \cite{HK11,HKT12} .  Alternatively, writing the spectrum as a sum over periodic orbits via the trace formula leads to periodic orbit expressions for the spectral determinant \cite{Aetal00, BHJ12, WGTR}, but it is the formulation provided in theorem \ref{thm: Fr} that will prove useful here.

\section{Almost Equilateral Edge Transitive Graphs}
\label{sec: transitive graphs}

In this section we evaluate the spectral determinant of two families of almost equilateral edge transitive graphs; complete graphs and complete bipartite graphs. A graph is \textit{edge transitive} if given any two edges $e_i$ and $e_j$, there is a graph automorphism that maps $e_i$ to $e_j$. The spectral determinant of an almost equilateral edge transitive graph can be obtained directly from the spectral determinant of the corresponding equilateral graph due to the high degree of symmetry in an edge transitive graph.

%Given a graph with a set of edges lengths $\vell=(\ell_1,\dots,\ell_E)$ the spectral determinant will be denoted $\det{'}(\mathcal{L}) (\vell)$, so for an equilateral graph $\det{'}(\mathcal{L}) = \det{'} (\mathcal{L}) (\ell,\dots,\ell)$.  

To evaluate the spectral determinant of the equilateral graph we use theorem \ref{thm: Fr}.
In an edge transitive graph each edge is indistinguishable from any other edge when looking at the connections around it. Consequently for all $i$ and $j$,
\begin{equation}
\left. \frac{\partial}{\partial \ell_i} \det{'}(\mathcal{L}) \right|_{\vell=(\ell,\dots,\ell)} 
= \left. \frac{\partial}{\partial \ell_j} \det{'}(\mathcal{L}) \right|_{\vell=(\ell,\dots,\ell)} \ .
\end{equation}
Hence, by the chain rule, 
\begin{equation}\label{eq: edge transitive}
\frac{\ud}{\ud \ell} \left( \det{'} (\mathcal{L}) (\ell,\dots,\ell) \right) = E \left. \frac{\partial}{\partial \ell_1} \det{'}(\mathcal{L}) \right|_{\vell=(\ell,\dots,\ell)} \ ,
\end{equation}
where we will assume, without loss of generality, that the perturbed edge of the almost equilateral graph is the first edge. 
%This means that, up to first order, the spectral determinant of an almost equilateral edge transitive graph is
%\begin{equation} \label{eq: trans sd}
%\det{'}(\mathcal{L}^\prime) = \det{'}(\mathcal{L}) + \frac{\epsilon}{E}\frac{\ud}{\ud \ell} \left( \det{'} (\mathcal{L}) (\ell,\dots,\ell) \right) + \mathcal{O}(\epsilon^2).
%\end{equation}

%\subsection{Examples}
%\label{subsec: examples}

For example, two families of edge transitive graphs are complete graphs and complete bipartite graphs. 
Using theorem \ref{thm: Fr}, the spectral determinant of an equilateral complete graph $K_n$ is,
\begin{equation}
\det{'} (\mathcal{L}) (\ell,\dots,\ell) = \frac{E2^{E}n^{n-2}\ell^{\beta+1}}{(n-1)^n} \ .
\end{equation} 
%We can see that
%\begin{equation}
%\frac{\ud}{\ud \ell} \left( \det{'} (\mathcal{L}) (\ell,\dots,\ell) \right) = \frac{E2^{E}n^{n-2}(\beta+1)\ell^{\beta}}{(n-1)^n}
%\end{equation}
Therefore, by equation \eqref{eq: edge transitive}, the spectral determinant of the almost equilateral graph where one edge has length $\ell+\epsilon$ is, 
 \begin{align}
  \det{'}(\mathcal{L}^\prime) 
 &=  \frac{E2^{E}n^{n-2}\ell^{\beta+1}}{(n-1)^n} \left(  1+\frac{\epsilon(\beta+1)}{\ell E} \right) +\mathcal{O}(\epsilon^2) \ , \label{eq: C SD}%\\
 %&= \det{'}(\mathcal{L})\left(1 + \frac{\epsilon(\beta+1)}{\ell E}\right) + \mathcal{O}(\epsilon^2) \label{eq: C SD 2},
  \end{align}
where $E=n(n-1)/2$ and $\beta=E-n+1$.

%\subsection{Complete Bipartite Graph}
%\label{subsec: bipartite graph}

%A complete bipartite graph is edge transitive, and consequently equation (\ref{eq: edge transitive}) also holds in this case.  Hence, the spectral determinant of the almost equilateral graph can once again be obtained directly from 
Similarly, the spectral determinant of an equilateral complete bipartite graph $K_{n,m}$ is, 
\begin{equation}
\det{'}(\mathcal{L})  (\ell,\dots,\ell) = 2^E \ell^{\beta+1} \ .
\end{equation}
Hence, applying (\ref{eq: edge transitive}), the spectral determinant of the almost equilateral complete bipartite graph is,
 \begin{align}
\det{'}(\mathcal{L}^\prime) 
&=  2^E \ell^{\beta+1} \left(  1+\frac{\epsilon(\beta+1)}{\ell E} \right) +\mathcal{O}(\epsilon^2) \ , \label{eq: CB SD}
%&= \det{'}(\mathcal{L})\left(1 + \frac{\epsilon(\beta+1)}{\ell E}\right) + \mathcal{O}(\epsilon^2),
\end{align}
where $E=mn$ and $\beta=E-(m+n)+1$.

\section{Almost Equilateral Circulant Graphs}\label{sec: circulant}

{Consider a connected $2p$-regular circulant graph $C_n(\va)$ with 
 $\va = (a_1, a_2, \ldots, a_p)$ and $a_1 < a_2 < \ldots < a_p < n/2$; see e.g. figure \ref{fig: circulant}.  
 We first note that, unlike complete graphs and complete bipartite graphs, a circulant graph is not edge transitive.  An edge of a circulant graphs whose vertices differ by $a_i$ modulo $n$ is mapped by a graph automorphism to another edge where the vertices also differ by $a_i$.  Consequently, there are $p$ independent partial derivatives of the spectral determinant of an equilateral circulant graph.
 To formulate the spectral determinant we first determine how individual eigenvalues of $\mR$ change under perturbation of an edge.  This approach can also be applied to the complete and complete bipartite graphs as described in the appendices.}

Let $\Ge$ be an equilateral metric graph. In this case, the matrix $\mR = \ell^{-1}\mathbf{L}$ in theorem \ref{thm: Fr} is
\begin{equation}\label{eq: R}
[\mR]_{u,v} = \begin{cases}
\ell^{-1}d_v,  & \mbox{if  } u=v\\
-\ell^{-1},  & \mbox{if  } u \sim v\\
0, &\mbox{otherwise},
\end{cases}
\end{equation}
where $\mathbf{L}$ is the \textit{combinatorial Laplacian matrix}; $\mathbf{L}=\mathbf{D}-\mathbf{A}$ where $\mathbf{D}$ is the diagonal matrix of vertex degrees with $[\mathbf{D}]_{v,v} = d_v$ and $\mathbf{A}$ is the adjacency matrix with $[\mathbf{A}]_{u,v} = 1$ if $u\sim v$ and $0$ otherwise.  Note that, as the edges are not directed, the adjacency matrix is by definition symmetric. 

The adjacency matrix $\mathbf{A}$ of a circulant graph is a circulant matrix, and consequently the matrices $\mathbf{L}$ and $\mR$ are also circulant matrices which share the same eigenvectors.  To use perturbation theory we want to know the multiplicity of the eigenvalues of $\mR$.  We will see that most eigenvalues of $\mR$ have a multiplicity of at least two.  We will fix $n$ to be prime, which guarantees that the graph is connected, and choose $\va= (a_1,  \ldots, a_p)$ so there are no eigenvalues of $\mathbf{A}$ (or equivalently $\mR$) with multiplicity greater than two, which is the generic case.  For example, if $p=2$ the maximum multiplicity is two provided  $n \nmid (a_1^2+a_2^2)$ which is satisfied if $n$ is large enough  \cite{HP24}.  The restriction to circulant graphs where the maximum multiplicity of eigenvalues of the adjacency matrix is two is convenient for the perturbation theory calculation.  However, if there were eigenvalues of $\mR$ with higher multiplicity, the calculation can be done in the same way and we would expect similar conclusions.

\subsection{Perturbation Theory}\label{subsec: Perturbation}

We now form a perturbed graph, $\Gae$, by changing the length of one edge $e = (a,b) \in \mathcal{E}$ to $\ell + \epsilon$.  Then,
\begin{equation}\label{eq: tR}
[\mRt]_{u,v} = \begin{cases}
 \ell^{-1}d_v,  & \mbox{if  } u=v\neq a\mbox{ or } b\\
-\ell^{-1},  & \mbox{if  } (u,v) \in \mathcal{E} - (a,b)  \\
\ell^{-1}d_v - \ell^{-2}\epsilon + \mathcal{O}(\epsilon^2), &\mbox{if  } u=v = a\mbox{ or } b\\
-\ell^{-1} + \ell^{-2}\epsilon + \mathcal{O}(\epsilon^2), & \mbox{if } (u,v)=(a,b)\\
0, &\mbox{otherwise}.
\end{cases}
\end{equation}
Equivalently,
\begin{equation}
\mRt = \mR + \mathbf{M} + \mathcal{O}(\mathbf{M}^2) = \mathbf{R} - \frac{\epsilon}{\ell^2}\mQ + \mathcal{O}(\epsilon^2) \ ,
\end{equation}
where $\mathbf{M} = -\frac{\epsilon}{\ell^2}\mQ$ and 
\begin{equation}\label{eq: Q}
[\mQ]_{u,v} = \begin{cases}
1, & u = v = a\mbox{ or } b\\
-1, & \mbox{if } (u,v)=(a,b)\\
0, &\mbox{otherwise}.
\end{cases}
\end{equation}

We use  the following proposition to relate the non-degenerate eigenvalues of $\mR$ and $\mRt$; see e.g. \cite{MMP}.

\begin{proposition}\label{prop: nonde}
If the matrix $\mathbf{R}$ has a non-degenerate eigenvalue $\lambda$ with normalized eigenvector $\mathbf{v}$, then $\mRt$ has a non-degenerate  eigenvalue 
\begin{equation}
\lambda^\prime = \lambda -\frac{\epsilon}{\ell^2} \mathbf{v}\cdot \mathbf{Q}\mathbf{v} + \mathcal{O}(\epsilon^2) \ .
\end{equation} 
\end{proposition}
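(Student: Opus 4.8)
The plan is to apply standard first-order (Rayleigh--Schr\"odinger) perturbation theory to the symmetric matrix $\mRt = \mR + \mathbf{M} + \mathcal{O}(\mathbf{M}^2)$ where $\mathbf{M} = -\tfrac{\epsilon}{\ell^2}\mQ$. Since $\mR$ is real symmetric, it has an orthonormal basis of eigenvectors, and the perturbing matrix $\mathbf{M}$ is of order $\epsilon$. For a \emph{non-degenerate} eigenvalue $\lambda$ with normalized eigenvector $\mathbf{v}$, the first-order correction to the eigenvalue is the diagonal matrix element $\mathbf{v}\cdot\mathbf{M}\mathbf{v}$, which gives $\lambda' = \lambda + \mathbf{v}\cdot\mathbf{M}\mathbf{v} + \mathcal{O}(\epsilon^2) = \lambda - \tfrac{\epsilon}{\ell^2}\,\mathbf{v}\cdot\mQ\mathbf{v} + \mathcal{O}(\epsilon^2)$, which is exactly the claimed formula.

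\textbf{Key steps in order.} First I would write the perturbed eigenpair as a formal power series in $\epsilon$, $\lambda' = \lambda + \epsilon\lambda^{(1)} + \mathcal{O}(\epsilon^2)$ and $\mathbf{v}' = \mathbf{v} + \epsilon\mathbf{w} + \mathcal{O}(\epsilon^2)$, and substitute into the eigenvalue equation $\mRt\,\mathbf{v}' = \lambda'\mathbf{v}'$. Collecting the $\epsilon^0$ terms recovers the unperturbed equation $\mR\mathbf{v} = \lambda\mathbf{v}$. Collecting the $\epsilon^1$ terms (writing $\mathbf{M} = \epsilon\mathbf{M}^{(1)}$ with $\mathbf{M}^{(1)} = -\tfrac{1}{\ell^2}\mQ$) yields
\begin{equation}
\mR\mathbf{w} + \mathbf{M}^{(1)}\mathbf{v} = \lambda\mathbf{w} + \lambda^{(1)}\mathbf{v} \ .
\end{equation}
I then take the inner product of this equation with $\mathbf{v}$. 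Because $\mR$ is symmetric, $\mathbf{v}\cdot\mR\mathbf{w} = (\mR\mathbf{v})\cdot\mathbf{w} = \lambda\,\mathbf{v}\cdot\mathbf{w}$, so the terms involving the unknown correction $\mathbf{w}$ cancel against $\lambda\,\mathbf{v}\cdot\mathbf{w}$ on the right-hand side. Using the normalization $\mathbf{v}\cdot\mathbf{v}=1$, this isolates $\lambda^{(1)} = \mathbf{v}\cdot\mathbf{M}^{(1)}\mathbf{v} = -\tfrac{1}{\ell^2}\,\mathbf{v}\cdot\mQ\mathbf{v}$, and restoring the factor of $\epsilon$ gives the stated result.

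\textbf{Main obstacle.} The calculation itself is routine once the framework is set up; the real content is justifying that the perturbation expansion is valid, i.e. that for $\lambda$ non-degenerate the perturbed matrix genuinely has a single smooth eigenvalue branch $\lambda'(\epsilon)$ that is analytic (or at least differentiable) in $\epsilon$ near $0$. This is where the non-degeneracy hypothesis is essential: standard analytic perturbation theory for symmetric (or self-adjoint) matrices guarantees that a simple eigenvalue, being separated from the rest of the spectrum by a positive gap, depends analytically on the entries of the matrix, so the power-series ansatz is legitimate and the $\mathcal{O}(\epsilon^2)$ remainder is controlled. I would therefore cite the reference \cite{MMP} for this well-known fact rather than reproving it, and simply note that the gap condition supplied by non-degeneracy rules out the eigenvalue-crossing phenomena that would otherwise obstruct differentiability. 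With that justification in hand, the projection onto $\mathbf{v}$ completes the proof.
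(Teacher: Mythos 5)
Your proposal is correct and is essentially the same approach the paper takes: the paper gives no proof of Proposition \ref{prop: nonde}, instead citing standard first-order (Rayleigh--Schr\"odinger) perturbation theory for symmetric matrices via \cite{MMP}, which is exactly the argument you spell out. Writing out the power-series projection step and deferring the analyticity of the simple-eigenvalue branch to \cite{MMP} is a faithful expansion of the paper's citation, not a different route.
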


For example, assume that all the eigenvalues of the matrix $\mR$ are simple. Let $\{\mathbf{v}_k\}_{k=1}^V$ be the set of orthonormal eigenvectors corresponding to the eigenvalues $\{\lambda_k\}_{k=1}^V$. Using \eqref{eq: Q}, one can see that 
\begin{equation}
\mathbf{v}_k\cdot \mQ\mathbf{v}_k = ([\mathbf{v}_k]_a - [\mathbf{v}_k]_b)^2,
\end{equation}
and hence, by proposition \ref{prop: nonde}, the eigenvalues of $\mRt$ are
\begin{equation}
\lambda_k^\prime = \lambda_k - \frac{\epsilon}{\ell^2}([\mathbf{v}_k]_a-[\mathbf{v}_k]_b)^2 + \mathcal{O}(\epsilon^2) \ .
\end{equation}

We use the following proposition if an eigenvalue $\lambda$ is degenerate; see e.g. \cite{MMP}.

\begin{proposition}\label{prop: degen}
Suppose that $\lambda$ is an eigenvalue of $\mR$ with multiplicity $n$ and corresponding orthonormal eigenvectors $\{\mathbf{u}_k\}_{k=1}^n$. Let $\mathbf{Q^\prime}$ be the $n\times n$ matrix defined by,
\begin{equation}\label{eq: new Q}
[\mathbf{Q^\prime}]_{j,k} = \mathbf{u}_j\cdot \mathbf{Q}\mathbf{u}_k \ .
\end{equation}
Then $n$ of the eigenvalues of $\mRt$ are
\begin{equation}
\lambda_k^\prime = \lambda -\frac{\epsilon}{\ell^2} \lambda_k + \mathcal{O}(\epsilon^2) \ ,
\end{equation}
where $\lambda_k$ is the $k$th eigenvalue of $\mQt$. 
\end{proposition}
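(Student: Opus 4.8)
The plan is to carry out first-order degenerate (Rayleigh--Schr\"odinger) perturbation theory directly on the symmetric family $\mRt = \mR - \tfrac{\epsilon}{\ell^2}\mQ + \mathcal{O}(\epsilon^2)$. Since $\lambda$ has multiplicity $n$, I expect it to split into (up to) $n$ analytic branches $\lambda_k'(\epsilon)$, each carrying an eigenvector that depends analytically on $\epsilon$; the goal is to identify the linear coefficient of every branch. I would begin by positing the expansions $\lambda'(\epsilon) = \lambda + \tfrac{\epsilon}{\ell^2}\,\mu + \mathcal{O}(\epsilon^2)$ and $\mathbf{v}(\epsilon) = \mathbf{v}^{(0)} + \tfrac{\epsilon}{\ell^2}\,\mathbf{v}^{(1)} + \mathcal{O}(\epsilon^2)$, where the unperturbed eigenvector $\mathbf{v}^{(0)} = \sum_{k=1}^n c_k \mathbf{u}_k$ is an as-yet-undetermined element of the degenerate eigenspace spanned by $\{\mathbf{u}_k\}_{k=1}^n$.

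Substituting these expansions into $\mRt\,\mathbf{v} = \lambda'\,\mathbf{v}$ and collecting powers of $\epsilon$, the order-$\epsilon^0$ equation $\mR\,\mathbf{v}^{(0)} = \lambda\,\mathbf{v}^{(0)}$ holds automatically because $\mathbf{v}^{(0)}$ lies in the eigenspace. The order-$\epsilon^1$ equation is
\begin{equation}
(\mR - \lambda)\,\mathbf{v}^{(1)} = \mQ\,\mathbf{v}^{(0)} + \mu\,\mathbf{v}^{(0)} \ .
\end{equation}
The next step is to project this identity onto each basis vector $\mathbf{u}_j$ of the eigenspace. Because $\mR$ is symmetric with $\mR\,\mathbf{u}_j = \lambda\,\mathbf{u}_j$, the left-hand side vanishes, $\mathbf{u}_j\cdot(\mR-\lambda)\,\mathbf{v}^{(1)} = \big((\mR-\lambda)\mathbf{u}_j\big)\cdot\mathbf{v}^{(1)} = 0$. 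Using orthonormality, $\mathbf{u}_j\cdot\mathbf{v}^{(0)} = c_j$ and $\mathbf{u}_j\cdot\mQ\,\mathbf{v}^{(0)} = \sum_k [\mQt]_{j,k}\,c_k$, so the projection collapses to
\begin{equation}
\mQt\,\mathbf{c} = -\mu\,\mathbf{c} \ .
\end{equation}

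Hence the admissible linear coefficients $\mu$ are exactly the negatives of the eigenvalues of $\mQt$, and the corresponding zeroth-order eigenvectors are fixed by requiring $\mathbf{c}$ to diagonalize $\mQt$. Writing $\mu = -\lambda_k$ for the $k$th eigenvalue of $\mQt$ yields the branches $\lambda_k' = \lambda - \tfrac{\epsilon}{\ell^2}\,\lambda_k + \mathcal{O}(\epsilon^2)$, as claimed. The point that needs care -- rather than an obstacle in the computation itself -- is justifying the analytic ansatz at a value of $\epsilon$ where the unperturbed eigenvalue is degenerate: one cannot start from an arbitrary orthonormal basis of the eigenspace, since a generic basis will not extend to an analytic family of eigenvectors. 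I would appeal to the Rellich--Kato theorem, which guarantees that a real-symmetric analytic family admits a full set of analytically-varying eigenvalues and eigenvectors; the secular equation $\mQt\,\mathbf{c} = -\mu\,\mathbf{c}$ is precisely the solvability condition that selects the ``good'' zeroth-order basis making this analytic continuation possible.
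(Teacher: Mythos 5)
Your proof is correct and is essentially the same approach the paper relies on: the paper in fact gives no proof of this proposition at all, citing it as standard first-order degenerate perturbation theory (see \cite{MMP}), and your Rayleigh--Schr\"odinger expansion with projection onto the degenerate eigenspace, yielding the secular equation $\mQt\mathbf{c}=-\mu\mathbf{c}$ and hence $\mu=-\lambda_k$, is exactly that standard derivation, with the Rellich--Kato appeal a welcome rigor upgrade over the textbook treatment. The one point left implicit is the multiplicity bookkeeping: your projection argument shows each branch's first-order coefficient is \emph{some} eigenvalue of $-\mQt$, and to conclude that the $n$ branches realize \emph{all} $n$ eigenvalues of $\mQt$ counted with multiplicity you should add that Rellich's analytic eigenvector families can be chosen orthonormal for every $\epsilon$, so their limits at $\epsilon=0$ form an orthonormal basis of the eigenspace that diagonalizes $\mQt$.
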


\subsection{The Matrix $\mR$ for a Circulant Graph}

Consider a $2p$-regular circulant graph $C_n(\va)$ with 
 $\va = (a_1, a_2, \ldots, a_p)$ and $a_1 < a_2 < \ldots < a_p < n/2$.  
 For an equilateral circulant graph, the matrix $\mR$ is a circulant matrix with entries
\begin{equation}
[\mR]_{u,v} = \begin{cases}
2p\ell^{-1}, & \mbox{if } u=v  \\
-\ell^{-1}, & \mbox{if }|u-v|= a_k (\mathrm{mod }\, n) \\
0, & \mbox{otherwise}  .
\end{cases}
\end{equation}
Eigenvalues of an $n\times n$ circulant matrix with first row,
\begin{equation}
\begin{bmatrix}
c_0 & c_1 & c_2 & \ldots & c_{n-1}
\end{bmatrix}
\end{equation}
are
\begin{equation}
\lambda_j = c_0 + c_1\omega^j + c_2\omega^{2j} + \ldots + c_{n-1}\omega^{(n-1)j} \ ,
\end{equation}
 with corresponding eigenvectors,
\begin{equation}\label{eq: eigenvector}
\mathbf{v}_j = \frac{1}{\sqrt{n}}\begin{bmatrix}
1, \omega^j, \omega^{2j}, \ldots, \omega^{(n-1)j}
\end{bmatrix}^T
\end{equation}
for $j = 0, 1, \ldots n-1$, where $\omega=\ue^{\frac{2\pi \ui}{n}}$ is a primitive $n$th root of unity \cite{Davis}. As $n$ is prime, $\omega^j$ is a primitive root of unity for $j=1,\dots, n-1$.
Therefore the eigenvalues of $\mR$ are $0$ (with multiplicity $1$) and $\lambda_j = 2p\ell^{-1} - 2\ell^{-1}\sum_{k=1}^p \cos\left(\frac{2\pi j a_k}{n}\right)$.  We notice that $\lambda_j=\lambda_{n-j}$ for $0<j<(n-1)/2$, so these eigenvalue have a multiplicity of at least two. Hence,
\begin{equation}\label{eq:detR circulant}
\det{'}(\mR) = \left( \frac{2}{\ell} \right)^{n-1} \prod_{j=1}^{(n-1)/2} \left( p-\sum_{k=1}^p \cos\left(\frac{2\pi j a_k}{n}\right)  \right)^2 \ .
\end{equation}
We assume that the vector $\va$ is chosen so the multiplicity of the non-zero eigenvalues is exactly two, which is the generic situation. %This is the generic situation and is the case, for example, when $p=2$ and $n \nmid (a_1^2+a_2^2)$ \cite{HP24}.
We now consider how the eigenvalues change in an almost equilateral graph.

\subsection{Eigenvalues of $\mR'$}

Zero is the only non-degenerate eigenvalue of $\mR$ with a normalized eigenvector,
\begin{equation}
\mathbf{v}_0 = \frac{1}{\sqrt{n}}\begin{bmatrix}
1, \ldots, 1
\end{bmatrix}^T.
\end{equation} Then, from proposition \ref{prop: nonde}, the corresponding eigenvalue of $\mRt$ is,
\begin{equation}
\lambda_0' = 0 - \frac{\epsilon}{\ell^2} \mathbf{v}_0 \cdot\mQ\mathbf{v}_0 + \mathcal{O}(\epsilon^2) = 0 + \mathcal{O}(\epsilon^2) 
\end{equation}
since $\mQ\mathbf{v}_0 = \mathbf{0}$.

%\subsection{Eigenvalues with Multiplicity Two}
The remaining eigenvalues of $\mR$ have multiplicity two, so $\lambda_j=\lambda_{n-j}$ for $j=1,\dots, (n-1)/2$ with corresponding eigenvectors,
\begin{align}
\mathbf{v}_j &= \frac{1}{\sqrt{n}}\begin{bmatrix}
1, \omega^j, \omega^{2j}, \ldots, \omega^{(n-1)j}
\end{bmatrix}^T  \ ,\\
\mathbf{v}_{n-j} & = \frac{1}{\sqrt{n}}\begin{bmatrix}
1, \omega^{-j}, \omega^{-2j}, \ldots, \omega^{-(n-1)j}
\end{bmatrix}^T\ .
\end{align}
Without loss of generality, we assume the edge $e=(0,a_m)$ has its length perturbed. 
% Now we can apply proposition \ref{prop: degen}.  
Then, 
\begin{equation}
[\mQ\mathbf{v}_j]_k = \frac{1}{\sqrt{n}}\begin{cases}
1-\omega^{a_mj}, & \mbox{if } k=0\\
\omega^{a_mj}-1, & \mbox{if } k=a_m\\
0, & \mbox{ otherwise}.
\end{cases} 
\end{equation} Therefore,
\begin{align}
\mathbf{v}_j\cdot \mQ\mathbf{v}_j  &= \frac{2}{n}\omega^{a_mj}\left(\cos\alpha_{m,j} - 1\right)\ , \\
\mathbf{v}_{n-j}\cdot \mQ\mathbf{v}_j & =- \frac{2}{n}\left(\cos\alpha_{m,j}-1\right)
\end{align} 
where 
\begin{equation}
\alpha_{m,j} = \frac{2\pi a_m j}{n} \ .
\end{equation}
Similarly,
\begin{align}
\mathbf{v}_{n-j}\cdot\mQ\mathbf{v}_{n-j} & = \frac{2}{n}\omega^{-a_mj}\left(\cos\alpha_{m,j} - 1\right) \ , \\
\mathbf{v}_j\cdot\mQ\mathbf{v}_{n-j} &=- \frac{2}{n}\left(\cos\alpha_{m,j} -1\right)
\end{align}
since $\cos\alpha_{m,n-j} = \cos\alpha_{m,j}$. Hence,
\begin{equation}
\mQt  = \frac{2\left(\cos\alpha_{m,j} - 1\right)}{n}\begin{bmatrix}
\omega^{a_mj} & -1\\
-1 & \omega^{-a_mj}
\end{bmatrix} \ .
\end{equation}
 The eigenvalues of the matrix in square brackets are $0$ and $2\cos\alpha_{m,j}$. So, by proposition \ref{prop: degen}, the eigenvalues of $\mRt$ are
\begin{align}
\lambda_j^\prime &= \lambda_j + \mathcal{O}(\epsilon^2) \ ,\\
 \lambda_{n-j}^\prime &= \lambda_j + \frac{\epsilon}{\ell^2}\frac{4\cos\alpha_{m,j}\left(\cos\alpha_{m,j} - 1\right)}{n} + \mathcal{O}(\epsilon^2)
\end{align}
for $j = 1, \ldots, (n-1)/2$.

\subsection{The Spectral Determinant}
With $n$ prime and $\va$ chosen so that the eigenvalues of $\mR$ have at most multiplicity two, the spectral determinant of $\mRt$ is,
\begin{align}
\det{'}(\mRt) 
& = \det{'}(\mR) + \frac{\epsilon}{\ell^2}\sum_{j=1}^{(n-1)/2} \frac{4\cos\alpha_{m,j}\left(\cos\alpha_{m,j} - 1\right)}{n}\frac{\det^\prime(\mR)}{\lambda_j}+\mathcal{O}(\epsilon^2)  \\
&= \det{'}(\mR) \left(1 + \frac{\epsilon}{\ell}\sum_{j=1}^{(n-1)/2} \frac{2\cos\alpha_{m,j}\left(\cos\alpha_{m,j} - 1\right)}{n(p-\sum_{k=1}^p\cos\alpha_{k,j})}\right) + \mathcal{O}(\epsilon^2) \label{eq:R' aecg}\ .
\end{align}
Hence, applying theorem \ref{thm: Fr}, the almost equilateral spectral determinant is,%with  $E=np$,
%\begin{align}
%\det{'}(\mathcal{L}{'}) 
%&=  \det{'}(\mathcal{L}) \left( 1+ \frac{\epsilon}{\ell}   
%\left(  1+\sum_{j=1}^{(n-1)/2} \frac{2\cos\alpha_{m,j}\left(\cos\alpha_{m,j} - 1\right)}{n(p-\sum_{k=1}^p\cos\alpha_{k,j})} \right)\right) +\mathcal{O}(\epsilon^2)  \ , \\
% \det{'}(\mathcal{L}) &= \frac{2^{np-1}\ell^{\beta+1}}{np^n} \prod_{j=1}^{(n-1)/2} \left( p-\sum_{k=1}^p \cos \alpha_{k,j}  \right)^2 \ ,
%\end{align}
%and $(0,a_m)$ is the perturbed edge.
\begin{equation}
\det{'}(\mathcal{L}{'}) =  \det{'}(\mathcal{L}) \left( 1+ \frac{\epsilon}{\ell}   
\left(  1+\frac{1}{np}+\sum_{j=1}^{(n-1)/2} \frac{2\cos\alpha_{m,j}\left(\cos\alpha_{m,j} - 1\right)}{n(p-\sum_{k=1}^p\cos\alpha_{k,j})} \right)\right) +\mathcal{O}(\epsilon^2)  \ , 
\end{equation}
where $(0,a_m)$ is the perturbed edge and 
\begin{equation}
 \det{'}(\mathcal{L}) = \frac{2^{np-1}\ell^{\beta+1}}{p^{n-1}} \prod_{j=1}^{(n-1)/2} \left( p-\sum_{k=1}^p \cos \alpha_{k,j}  \right)^2 \ .
\end{equation}

%\begin{equation}
%\det{'}(\mathcal{L}{'}) 
%=  \det{'}(\mathcal{L}) + \epsilon\frac{2^{\frac{np}{2}}\ell^{\frac{np}{2}-1}\det{'}(\mR)}{np^n}\sum_{j=1}^{(n-1)/2} %\frac{4\cos\alpha_{m,j}\left(\cos\alpha_{m,j} - 1\right)}{n(2p-\sum_{k=1}^p\cos\alpha_{k,j})} +\mathcal{O}(\epsilon^2) \ .
%\end{equation}

\section{Spanning Trees} \label{sec: spanning trees}

The relation between the spectral determinant and number of spanning trees was studied in \cite{HWST}.
\begin{theorem}\label{thm: EST}
	Given an equilateral graph $\Ge$ with $E$ edges of length $\ell$ and $V$ vertices,
	\begin{equation}
	 \# \mbox{ spanning trees }=
	T_\Ge =\frac{\prod_{v\in \mathcal{V}}d_v}{E2^E\ell^{\beta+1}} \det{'}(\mathcal{L}) \ ,
	\end{equation}
	where $\beta=E-V+1$ is the first Betti number of the graph.
\end{theorem}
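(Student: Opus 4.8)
The plan is to derive the formula by specializing Friedlander's spectral determinant formula (theorem \ref{thm: Fr}) to the equilateral case and then matching the result against the discrete matrix tree theorem (theorem \ref{thm KMT}). The key observation is that both theorems involve the determinant of essentially the same finite matrix: Friedlander's $\det{'}(\mR)$ and Kirchhoff's $\det{'}(\mathbf{L})$ differ only by an explicit scalar factor once the graph is equilateral. So the entire argument reduces to tracking that scalar factor and cancelling a common $V$.

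First I would set every edge length equal to $\ell$ in \eqref{eq: spec det formula}. This makes $\ell_{tot}=E\ell$ and $\prod_{e\in\mathcal{E}}\ell_e=\ell^E$, and, as recorded at equation \eqref{eq: R}, reduces the matrix $\mR$ to $\ell^{-1}\mathbf{L}$, where $\mathbf{L}$ is the combinatorial Laplacian. Since $\Ge$ is connected, $\mathbf{L}$ has a single zero eigenvalue and exactly $V-1$ nonzero eigenvalues; scaling each of these by $\ell^{-1}$ gives $\det{'}(\mR)=\ell^{-(V-1)}\det{'}(\mathbf{L})$. Substituting these three facts into \eqref{eq: spec det formula} yields
\begin{equation}
\det{'}(\mathcal{L}) = \frac{2^E}{V}\,\frac{(E\ell)\,\ell^E}{\prod_{v\in\mathcal{V}}d_v}\,\ell^{-(V-1)}\,\det{'}(\mathbf{L}) \ .
\end{equation}

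The step requiring the most care is the bookkeeping of the power of $\ell$: collecting exponents gives $1+E-(V-1)=E-V+2=\beta+1$, using $\beta=E-V+1$. Hence $\det{'}(\mathcal{L})=\frac{2^E E\,\ell^{\beta+1}}{V\prod_{v\in\mathcal{V}}d_v}\det{'}(\mathbf{L})$. Solving for $\det{'}(\mathbf{L})$ and inserting into Kirchhoff's identity $\#\mbox{ spanning trees}=\frac{1}{V}\det{'}(\mathbf{L})$ cancels the factor $V$ and produces exactly $T_\Ge=\frac{\prod_{v\in\mathcal{V}}d_v}{E\,2^E\,\ell^{\beta+1}}\det{'}(\mathcal{L})$, as claimed. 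Beyond this exponent count the argument is purely formal; I do not anticipate a genuine analytic obstacle, since Friedlander's theorem has already done the essential work of reducing the infinite regularized spectral product to a finite determinant, leaving only the identification $\mR=\ell^{-1}\mathbf{L}$ and the correct counting of nonzero eigenvalues of $\mathbf{L}$.
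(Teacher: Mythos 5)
Your proof is correct and follows essentially the same route the paper relies on: specializing Friedlander's formula (theorem \ref{thm: Fr}) to the equilateral case via the identification $\mR=\ell^{-1}\mathbf{L}$ (equation \eqref{eq: R}) and matching with Kirchhoff's theorem \ref{thm KMT}, exactly the chain of equalities the paper itself exhibits in its $K_4$ example and in equation \eqref{eq: T formula with R}. Your exponent bookkeeping $1+E-(V-1)=\beta+1$ and the count of $V-1$ nonzero eigenvalues of $\mathbf{L}$ for a connected graph are both accurate, so there is nothing to correct.
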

For a generic quantum graph $\Gamma$ with edge lengths in $[\ell, \ell+\epsilon]$, it was shown that the number of spanning trees is the closest integer to $T_\Gamma$ if $\epsilon$ is sufficiently small; see theorem \ref{thm:generic spanning trees}.  We now apply the results for the spectral determinants of almost equilateral graphs to estimate how fast $T_\Gamma$ deviates from the number of spanning trees.  

Given an almost equilateral quantum graph $\Gae$ with $E-1$ edges of length $\ell$ and one edge of length $\ell + \epsilon$, from equation (\ref{eq: spec det formula}),
%Using this and \eqref{eq: T}, one can see that
\begin{equation}\label{eq: T formula with R}
T_{\Gae} = \frac{\prod_{v \in \mathcal{V}} d_v}{E2^E\ell^{\beta+1}}\det{'}(\mathcal{L^\prime}) = \frac{\ell_{tot}\prod_{e\in\mathcal{E}} \ell_e}{EV\ell^{\beta+1}}\det{'}(\mRt) \ ,
\end{equation} 
where
\begin{equation}\label{eq: coefficient l}
\ell_{tot}\prod_{e\in \mathcal{E}} \ell_e =  E\ell^{E+1} + \epsilon\ell^E(E+1) + \mathcal{O}(\epsilon^2) \ .
\end{equation}

By theorem \ref{thm:generic spanning trees}, we expect $T_{\Gae}$ to be close to the number of spanning trees when $\epsilon$ is small.  We will first look at an explicit example to demonstrate that $T_{\Gae}$ deviates from the  number of spanning trees when $\epsilon$ is large. Consider the equilateral complete graph $K_4$ (which has 4 vertices and 6 edges) with $\ell = 1$. In this case, the matrix $\mR$ has a simple eigenvalue of $0$ and an eigenvalue of $4$ (with multiplicity $3$), so $\det{'}(\mR) = 64$. By theorem \ref{thm: EST}, the number of spanning trees is
\begin{equation}
 \# \mbox{ spanning trees }=
	T_\Ge =\frac{\prod_{v\in \mathcal{V}}d_v}{E2^E\ell^{\beta+1}} \det{'}(\mathcal{L}) = \frac{\ell_{tot}\prod_{e\in\mathcal{E}} \ell_e}{EV\ell^{\beta+1}}\det{'}(\mR) = 16 \ .
\end{equation}
%since $\ell_{tot} = 6$ and $\prod_{e\in\mathcal{E}} \ell_e = 1$.

Now suppose one edge has length $2$, i.e. $\epsilon=1$. Without loss of generality, we take this to be the edge connecting the first and second vertices. Then,
\begin{equation}
\mRt = \begin{bmatrix}
2.5 & -0.5 & -1 & -1\\
-0.5 & 2.5 & -1 & -1\\
-1 & -1 & 3 & -1\\
-1 & -1 & -1 & 3
\end{bmatrix}
\end{equation}
which has eigenvalues of $0$, $3$, and $4$ (with multiplicity $2$), so $\det{'}(\mRt) = 48$. Therefore, 
\begin{equation}
T_{\Gae} = \frac{\ell_{tot}\prod_{e\in\mathcal{E}} \ell_e}{EV\ell^{\beta+1}}\det{'}(\mRt) = 28
\end{equation}
which is clearly not close to the number of spanning trees of $K_4$.

\subsection{Complete Graphs and Complete Bipartite Graphs}
Using equation (\ref{eq: C SD}), we can see that for an almost equilateral complete graph $K_n$,
\begin{align}
T_{\Gae}
& =\frac{\prod_{v\in \mathcal{V}}d_v}{E2^E\ell^{\beta+1}} \det{'}(\mathcal{L}) + \epsilon\left(\frac{\prod_{v\in \mathcal{V}}d_v(\beta+1)}{E^22^E\ell^{\beta+2}} \det{'}(\mathcal{L})\right) + \mathcal{O}(\epsilon^2)\\
& = \# \mbox{ spanning trees } + \frac{\epsilon}{\ell}\frac{n^{n-3}(n^2-3n+4)}{n-1} + \mathcal{O}(\epsilon^2),
\end{align}
since $\det{'}(\mathcal{L}) = E2^En^{n-2}\ell^{\beta+1}(n-1)^{-n}$, $\prod_{v\in\mathcal{V}}d_v = (n-1)^n$, $E = n(n-1)/2$, and $\beta = E-n+1$ for a complete equilateral graph.

For a large graph, $T_{\Gae}\approx \# \mbox{ spanning trees } + (\epsilon/\ell) n^{n-2} $.  If we use this to estimate the window of edge lengths for which $T_{\Gamma}$ can be used to evaluate the number of spanning trees, we would expect $[T_{\Gamma}]$, the nearest integer to $T_{\Gamma}$, gives the number of spanning trees provided $(\epsilon/\ell) n^{n-2}\leq 1/2E$.  This requires that the edge lengths all lie in an interval $[\ell,\ell+\delta)$ for $\delta \lesssim  \ell n^{2-n}/2E\approx \ell n^{-n} $.

%\subsection{Complete Bipartite Graphs}

Now we consider an almost equilateral complete bipartite graph, $K_{m,n}$. Substituting \eqref{eq: CB SD} into equation \eqref{eq: T formula with R}, 
\begin{align}
T_{\Gae} 
&= \# \mbox{ spanning trees } + \frac{\epsilon }{\ell}\frac{(\beta+1)\prod_{v\in\mathcal{V}}d_v}{E^2} + \mathcal{O}(\epsilon^2) \\
&=\# \mbox{ spanning trees } + \frac{\epsilon }{\ell}m^{n-2}n^{m-2}(mn-m-n+2) + \mathcal{O}(\epsilon^2)\label{eq: bipartite tree}\ ,
\end{align}
because for a complete bipartite graph $E = mn$, $\beta = E - m - n + 1$, and $\prod_{v\in \mathcal{V}} d_v = n^m m^n$.

To use this result to estimate a window of edge lengths for which $[T_{\Gamma}]$ could be expected to count the spanning trees, consider a large graph $K_{n,n}$. Then $T_{\Gae} \approx \# \mbox{ spanning trees } +(\epsilon/\ell) n^{2n-2}$.  Consequently $[T_{\Gamma}]$ might be expected to provide the number of spanning trees for a metric graph with edge lengths in a window $[\ell, \ell+\delta )$ for $\delta \lesssim \ell n^{-2n}/2$.

\subsection{Circulant Graphs}

Consider an almost equilateral circulant graph $C_n(\va)$ with a prime number of vertices and whose maximum degeneracy of eigenvalues of $\mR$ is two. Substituting (\ref{eq:R' aecg}) in (\ref{eq: T formula with R}) and using (\ref{eq:detR circulant}),
\begin{align}
&T_{\Gae} 
= \# \mbox{ spanning trees } + \epsilon \frac{\ell^{n-2}}{n}\det{'}(\mR)\left(\frac{np+1}{np} + C\right) + \mathcal{O}(\epsilon^2) \\
&= \# \mbox{ spanning trees } + \frac{ \epsilon}{\ell} \frac{2^{n-1}}{n} \left(\frac{np+1}{np} + C\right)    \prod_{j=1}^{(n-1)/2} \left( p-\sum_{k=1}^p \cos \alpha_{k,j}  \right)^2
   + \mathcal{O}(\epsilon^2) \ , \label{eq: Circ spanning trees}
\end{align}
where the number of edges is $E=np$,  the perturbed edge is $(0,a_m)$, and 
\begin{equation}
C=\displaystyle\sum_{j=1}^{(n-1)/2} \frac{2\cos \alpha_{m,j}\left(\cos \alpha_{m,j} - 1\right)}{n\left( p-\sum_{k=1}^p\cos \alpha_{k,j} \right) }\  .
\end{equation}

While equation (\ref{eq: Circ spanning trees}) holds for generic circulant graphs of prime order where eigenvalues of the adjacency matrix have a maximum multiplicity of two,
to use this to estimate the size of a window of edge lengths for which $[T_\Gamma]$ can be expected to count the number of spanning trees, we still need to gauge the size of $C$.  In order to estimate $C$ we can first consider the typical size of $\sum_{k=1}^p \cos \alpha_{k,j}$.  For large values of $p$ (which requires large graphs) and most angles $\theta \in (0,\pi)$ we can expect,
\begin{equation}
p-\sum_{k=1}^p \cos (\theta a_k) \approx p \ ,
\end{equation}
as the sum will behave like a sum evaluated at uniformly distributed random angles.
Then for a large graph,
\begin{equation}
C\approx \displaystyle\sum_{j=1}^{(n-1)/2} \frac{2\cos \alpha_{m,j}\left(\cos \alpha_{m,j} - 1\right)}{np } \approx \frac{1}{2p} \ ,
\end{equation}
where we replace the sum with averages assuming $n$ is large.  Hence, for a typical large $2p$-regular circulant graph,
\begin{equation}
T_{\Gae} 
\approx \# \mbox{ spanning trees } + \left(\frac{ \epsilon}{\ell}\right) \frac{1}{n} \left(1 + \frac{1}{2p} \right)     \left( 2p \right)^{n-1}
\  .
\end{equation}
Therefore we might expect $[ T_\Gamma ] $ to evaluate the number of spanning trees for graphs where the edge lengths fall in a window $[\ell,\ell+\delta )$ for $\delta \lesssim \ell  (2p)^{-n} (1+1/2p)^{-1}$.  

\subsection{Summary}

In theorem \ref{thm:generic spanning trees}, the integer $[T_\Gamma]$ is the number of spanning trees of a metric graph provided the edge lengths lie in a window $[\ell,\ell+\delta)$ with 
	\begin{equation}
\delta< \frac{\ell}{V^V \, 2^{E+V}\sqrt{2EV}}  \ .
\end{equation}
Considering the families of almost equilateral graphs analyzed in this section, this window appears unnecessarily narrow.  For complete graphs $K_n$ a window of size $\delta<\ell n^{-n}$ could be expected to be sufficient, while for complete bipartite graphs $K_{n,n}$ a window of size $\delta < \ell n^{-2n}/2$ should suffice.  For typical $2p$-regular circulant graphs $C_n(\va)$, edge lengths constrained to a window of size $\delta \lesssim \ell (2p)^{-n} (1+1/2p)^{-1}$ could be expected to be sufficient for $[T_{\Gamma}]$ to measure the number of spanning trees.  All of these results would agree, for example, with a window of size of $\ell V^{-V}$.  This is still a tight constraint which requires all of the edge lengths to be close together on the scale of $\ell$.  However, this argument suggests an analog of Kirchoff's matrix tree theorem for quantum graphs applies to graphs with edge lengths that fall in a window that is much wider than that required in \cite{HWST}. 

Alternatively, rather than estimate the size of an interval in which all of the edge lengths should lie, we could instead conjecture that $[T_\Gamma]$ should be the number of spanning trees of the metric graph provided that
\begin{equation}
	\sum_{e\in \mathcal{E}} \left|\frac{1}{E} - \frac{\ell_e}{\ell_{\mathrm{tot}}} \right| < V^{-V} \ ,
\end{equation}
where $\ell_{\mathrm{tot}}=\sum_{e\in \mathcal{E}} \ell_e$ is the total length of the graph. This formulation is obtained by requiring that the average of the distance from each edge length to the mean edge length is less than $(\ell_{\mathrm{tot}}/E)V^{-V}$. 

Theorem \ref{thm:generic spanning trees}, which originally appeared in \cite{HWST}, provides a version of Kirchhoff's matrix tree theorem for quantum graphs by connecting the number of spanning trees to the spectral determinant of the quantum graph for quantum graphs where the edge lengths are very close together.  
While, as our counterexample shows, it is not the case that $[T_{\Gamma}]$ provides the number of spanning trees of any quantum graph, there are reasons to think that $T_\Gamma$ should be somewhat robust to changes in the edge lengths. 
Consequently a quantum version of Kirchhoff's matrix tree theorem should connect the spectral determinant of a quantum graph to the number of spanning trees for a wider class of quantum graphs.  The results presented here provide a first step in improving the quantum version of a matrix tree theorem by estimating a wider window on which such a result might be proved for all metric graphs.  Alternative approaches may establish such a connection for quantum graphs with a wide window of edge lengths but with some additional constraint on the graphs, for example on the size of the spectral gap of the Laplacian, which is the first positive eigenvalue.  

\bmhead{Acknowledgments}

TW would like to thank the Rose-Hulman leaves program and the Baylor mathematics department for their hospitality during her sabbatical, when this work was carried out. The authors would like to thank the anonymous referees for their helpful comments.

%\bibliography{sn-bibliography}% common bib file

\appendix

\section{Appendix: Eigenvalues of an Almost Equilateral Complete Graph}
\label{sec: app complete}

%In this appendix, we will find $\det{'}(\mathcal{L}^\prime)$ of a complete graph by first finding the eigenvalues via perturbation as we did in section \ref{sec: circulant}.

For an equilateral complete graph $K_n$, the matrix $\mathbf{R}$ has eigenvalues $0$ and $n/\ell$ (with multiplicity $n-1$). Without loss of generality, suppose that the length of the edge connecting vertices 1 and $n$ is increased from $\ell$ to $\ell + \epsilon$.  Notice that $\{\mathbf{u}_k\}_{k=1}^{n-1}$ with,
\begin{equation}
[\mathbf{u}_k]_j =\frac{1}{\sqrt{k(k+1)}} \begin{cases}
1, & \mbox{ if } j\leq k\\
-k, & \mbox{ if } j = k+1\\
0, & \mbox{ otherwise} 
\end{cases}
\end{equation}
is a set of $n-1$ orthonormal eigenvectors of $\mR$ corresponding to the eigenvalue $n/\ell$. Then, from equation \eqref{eq: new Q}, the  $(n-1) \times (n-1)$ matrix $\mQt$ has entries
\begin{equation}
[\mQt]_{j,k} = \mathbf{u}_j\cdot \mathbf{Q}\mathbf{u}_k = \begin{cases}
\frac{1}{\sqrt{j(j+1)k(k+1)}}, & \mbox{ if } j,k \neq n-1\\
\frac{n}{\sqrt{j(j+1)k(k+1)}}, & \mbox{ if either } j \mbox{ or } k = n-1\\
 \frac{n}{n-1}, & \mbox{ if } j=k=n-1.
\end{cases}
\end{equation}
Define $n-3$ vectors $\{\mathbf{v}_k\}_{k=2}^{n-2}$ via,
\begin{equation}
[\mathbf{v}_k]_j = \begin{cases}
-\frac{1}{\sqrt{k(k+1)/2}}, & \mbox{ if } j=1\\
1, & \mbox{ if } j=k\\
0, & \mbox{ otherwise} , 
\end{cases}
\end{equation}
and a vector $\mathbf{v}_{n-1}$ with,
\begin{equation}
[\mathbf{v}_{n-1}]_j = \begin{cases}
-\frac{1}{\sqrt{n(n-1)/2}}, & \mbox{ if } j=1\\
\frac{1}{n}, & \mbox{ if } j=n-1\\
0, & \mbox{ otherwise}.
\end{cases}
\end{equation}
One can verify that $\{\mathbf{v}_k\}_{k=2}^{n-1}$ are $n-2$ linearly independent eigenvectors of $\mQt$ with eigenvalue $0$. 
Similarly, define $\mathbf{v}$ via,
\begin{equation}
[\mathbf{v}]_j = \begin{cases}
\frac{1}{\sqrt{j(j+1)/2}}, & \mbox{ if } j\neq n-1\\
\frac{n}{\sqrt{n(n-1)/2}}, & \mbox{ if } j =n-1.
\end{cases}
\end{equation}
Then $\mathbf{v}$ is an eigenvector of $\mQt$ with eigenvalue $2$. 
Hence, applying proposition \ref{prop: degen}, the eigenvalue $n/\ell$ has degeneracy $n-2$ (up to first-order in $\epsilon$) and there is a new simple eigenvalue, 
\begin{displaymath}
\frac{n}{\ell} -\frac{2\epsilon}{\ell^2} + \mathcal{O}(\epsilon^2) \ .
\end{displaymath}

The remaining eigenvalue of $\mathbf{R}$ is $\lambda =0$ with normalized eigenvector $\mathbf{u}$ where, 
\begin{equation}
[\mathbf{u}]_j = \frac{1}{\sqrt{n}} \qquad \mbox{for } j = 1, 2, \ldots, n.
\end{equation} 
Hence $[\mQ \mathbf{u}]_j = 0$ for all $j=1, \ldots , n$ and $ \lambda^\prime = 0 + \mathcal{O}(\epsilon^2)$ by proposition \ref{prop: nonde}.

Combining these results,
\begin{align}\label{eq: aecR}
\det{'}(\mRt) &= \prod_{j=2}^n \lambda_j^\prime 
= \left(\frac{n}{\ell}\right)^{n-1} - \epsilon \frac{2n^{n-2}}{\ell^n} + \mathcal{O}(\epsilon^2) \ .
%\\ &= \det{'}(\mR) - \epsilon \frac{2V^{V-2}}{\ell^V} + \mathcal{O}(\epsilon^2). \label{eq: detR complete}
\end{align}
Therefore, from theorem \ref{thm: Fr}, the spectral determinant of the almost equilateral complete graph $K_n$ is,
\begin{align}
\det{'}(\mathcal{L}^\prime) 
%& = \frac{2^{n(n-1)}}{n}\frac{\ell^{n(n-1)-1}(\ell+\epsilon)}{(n-1)^n}\left(\left(\frac{n}{\ell}\right)^{n-1} - \epsilon \frac{2n^{n-2}}{\ell^n} + \mathcal{O}(\epsilon^2)\right)\nonumber\\
&= \frac{E2^{E}n^{n-2}\ell^{\beta+1}}{(n-1)^n} \left(  1+\frac{\epsilon}{\ell} \left(1-\frac{2(n-2)}{n(n-1)}\right)\right) +\mathcal{O}(\epsilon^2) \\
%\\ &= \frac{2^{V(V-1)}V^{V-2}\ell^{(V-1)^2}}{(V-1)^V}+ \epsilon \frac{2^{V(V-1)}V^{V-3}\ell^{V(V-2)}(V-2)}{(V-1)^V}  +\mathcal{O}(\epsilon^2) 
&=  \frac{E2^{E}n^{n-2}\ell^{\beta+1}}{(n-1)^n} \left(  1+\frac{\epsilon(\beta+1)}{\ell E} \right) +\mathcal{O}(\epsilon^2) \ ,
\end{align}
where $E=n(n-1)/2$ and $\beta=E-n+1$, which is \eqref{eq: C SD} in section \ref{sec: transitive graphs}.

\section{Appendix: Eigenvalues of an Almost Equilateral Complete Bipartite Graph}
\label{sec: app bipartite}

%In this appendix, we will find $\det{'}(\mathcal{L}^\prime)$ of a complete bipartite graph by first finding the eigenvalues via perturbation as we did in section \ref{sec: circulant}.

Consider a complete bipartite quantum graph $K_{m,n}$ where each edge has length $\ell$. Let $A = \{1, 2, \ldots, m\}$ and $B = \{m+1, m+2, \ldots, m+n\}$ where all vertices of $A$ are adjacent to all vertices of $B$; figure \ref{fig: complete bipartite} shows the complete bipartite graph $K_{3,5}$.  For such a graph, from equation (\ref{eq: R}), we have,
\begin{equation}
[\mathbf{R}]_{u,v} = \begin{cases}
\ell^{-1}n, & u=v \in A\\
\ell^{-1}m, & u=v \in B\\
-\ell^{-1}, & u\sim v\, (\mbox{so } u\in A \mbox{ and } v \in B \mbox{ or vice versa})\\
0, & \mbox{otherwise} \ .
\end{cases}
\end{equation}
The eigenvalues of $\mathbf{R} = \ell^{-1}\mathbf{L}$  are $\ell^{-1}(m+n)$, $\ell^{-1}n$ (with multiplicity $m-1$), $\ell^{-1}m$ (with multiplicity $n-1$) and $0$. Therefore
\begin{equation}\label{eq: det CB}
\det{'}(\mathbf{R}) = \frac{(m+n)m^{n-1}n^{m-1}}{\ell^{m+n-1}} \ .
\end{equation} 

Without loss of generality, we perturb the length of the edge $(1,m+1)$ and apply propositions \ref{prop: nonde} and  \ref{prop: degen} to obtain the following proposition for the spectrum of $\mRt$, which is proven in the following  subsection. 

\begin{proposition}\label{thm: eigenvalues of R}
Let $\Ge = K_{m,n}$ be an equilateral complete bipartite graph with edge length $\ell$. Suppose a new graph $\Gae$ is formed by increasing the length of one edge to $\ell + \epsilon$. Up to first-order in $\epsilon$, the eigenvalues of the matrix $\mRt$  are $0$, $\ell^{-1}(m+n) - \epsilon \ell^{-2}\left(\frac{m+n}{mn}\right)$, $\ell^{-1}n$ (with multiplicity $m-2$), $\ell^{-1}m $ (with multiplicity $n-2$), $\ell^{-1}n - \epsilon \ell^{-2} \left(\frac{m-1}{m}\right)$, and $\ell^{-1}m - \epsilon \ell^{-2} \left( \frac{n-1}{n}\right)$.
\end{proposition}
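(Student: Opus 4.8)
The plan is to apply the perturbation-theory propositions \ref{prop: nonde} and \ref{prop: degen} to each of the four eigenvalues of $\mR$ in turn. First I would record the perturbation in the same form used for circulant graphs: $\mRt = \mR - (\epsilon/\ell^2)\mQ + \mathcal{O}(\epsilon^2)$, where for the perturbed edge $(1,m+1)$ the matrix $\mQ = (\mathbf{e}_1 - \mathbf{e}_{m+1})(\mathbf{e}_1 - \mathbf{e}_{m+1})^T$ is a rank-one outer product, so that $\mathbf{v}\cdot\mQ\mathbf{v} = ([\mathbf{v}]_1 - [\mathbf{v}]_{m+1})^2$ for every vector $\mathbf{v}$. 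Next I would catalog the eigenvectors of $\mR$, each verified directly from its block structure: the eigenvalue $0$ has the constant eigenvector; the eigenvalue $\ell^{-1}(m+n)$ has the eigenvector equal to $-n$ on $A$ and $m$ on $B$; the eigenvalue $\ell^{-1}n$ (multiplicity $m-1$) has eigenvectors supported on $A$ with entries summing to zero; and $\ell^{-1}m$ (multiplicity $n-1$) has eigenvectors supported on $B$ summing to zero.

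For the two simple eigenvalues I would apply proposition \ref{prop: nonde}. The constant eigenvector satisfies $[\mathbf{v}]_1 = [\mathbf{v}]_{m+1}$, so its first-order shift vanishes and the eigenvalue stays $0$. For $\ell^{-1}(m+n)$, normalizing the eigenvector gives $[\mathbf{v}]_1 - [\mathbf{v}]_{m+1} = -(m+n)/\sqrt{mn(m+n)}$, whose square is $(m+n)/(mn)$, yielding the stated shift $-\epsilon\ell^{-2}(m+n)/(mn)$.

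The degenerate eigenvalues are the crux, handled via proposition \ref{prop: degen}. The key simplification is that the eigenvectors for $\ell^{-1}n$ vanish on $B$, and in particular at vertex $m+1$, so $\mathbf{u}_j\cdot\mQ\mathbf{u}_k = [\mathbf{u}_j]_1[\mathbf{u}_k]_1$; hence $\mQt$ is itself rank one. A rank-one symmetric matrix has a single nonzero eigenvalue equal to its trace, $\sum_{k=1}^{m-1}[\mathbf{u}_k]_1^2$. Rather than fix an explicit orthonormal basis, I would evaluate this sum as the basis-independent $(1,1)$ entry of the spectral projector onto the eigenspace. Since that eigenspace is the set of vectors supported on $A$ with zero sum, its projector is $P_A - \tfrac{1}{m}\mathbf{1}_A\mathbf{1}_A^T$, whose $(1,1)$ entry is $1 - 1/m = (m-1)/m$. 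Proposition \ref{prop: degen} then produces the shifted eigenvalue $\ell^{-1}n - \epsilon\ell^{-2}(m-1)/m$ together with $\ell^{-1}n$ retained with multiplicity $m-2$. The eigenvalue $\ell^{-1}m$ is treated identically with $A$ and $B$ exchanged, giving $\ell^{-1}m - \epsilon\ell^{-2}(n-1)/n$ and $\ell^{-1}m$ with multiplicity $n-2$.

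I expect the main obstacle to be the degenerate-eigenvalue bookkeeping, specifically avoiding an unwieldy explicit orthonormal basis for the $(m-1)$- and $(n-1)$-dimensional eigenspaces. The rank-one/trace-via-projector argument sidesteps this: the only point needing care is to confirm that $\mQt$ really is rank one, which holds because vertex $1\in A$ and vertex $m+1\in B$, so each degenerate eigenspace sees only a single perturbed vertex. Collecting the six resulting eigenvalue families accounts for all $m+n$ eigenvalues and reproduces the statement.
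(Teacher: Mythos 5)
Your proof is correct and follows the paper's overall skeleton---the same catalog of eigenvectors of $\mR$, with propositions \ref{prop: nonde} and \ref{prop: degen} applied eigenvalue by eigenvalue---but it handles the degenerate eigenvalues, which are the substance of the proof, by a genuinely different route. The paper fixes an explicit orthonormal basis $\{\mathbf{v}_k\}_{k=1}^{m-1}$ of the $\ell^{-1}n$ eigenspace, computes the entries $[\mQt]_{j,k}=1/\sqrt{j(j+1)k(k+1)}$, and proves a dedicated lemma (lemma \ref{thm: eigenvalues Q}) by exhibiting explicit eigenvectors of $\mQt$, the arithmetic resting on the telescoping sum $\sum_{k=1}^{m-1}\frac{1}{k(k+1)}=1-\frac{1}{m}$. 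You instead note that $\mQ=(\mathbf{e}_1-\mathbf{e}_{m+1})(\mathbf{e}_1-\mathbf{e}_{m+1})^T$ has rank one, hence so does $\mQt$ (because the eigenvectors for $\ell^{-1}n$ vanish at vertex $m+1$), so its only nonzero eigenvalue is its trace; and you evaluate that trace basis-independently as the $(1,1)$ entry of the spectral projector $P_A-\frac{1}{m}\mathbf{1}_A\mathbf{1}_A^T$ onto the eigenspace, giving $(m-1)/m$ with no basis ever chosen. The two computations produce the same number---the paper's $\mQt$ is exactly your $ww^T$ with $w_j=[\mathbf{v}_j]_1$---but your version eliminates lemma \ref{thm: eigenvalues Q} and its explicit eigenvectors, explains structurally why only one eigenvalue per degenerate family moves at first order (a rank-one perturbation can move at most one), and would equally streamline the complete-graph computation in appendix \ref{sec: app complete}. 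One caveat, which you inherit from the paper rather than introduce: both arguments implicitly assume $m\neq n$. If $m=n$ the two degenerate families merge into a single eigenspace of dimension $2n-2$, and degenerate perturbation theory must be run on that full space; your projector formalism handles this immediately (the single rank-one shift becomes $(1-\frac{1}{m})+(1-\frac{1}{n})$, with the remaining $2n-3$ eigenvalues unshifted), whereas the proposition as stated predicts two separate shifts---the spectral determinant agrees at first order either way, but the individual eigenvalues do not.
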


\subsection{Eigenvalues of $\mR'$}
\label{sec: bnondeg}

First, consider the simple eigenvalue $\lambda=\ell^{-1} (m+n)$ with normalized eigenvector $\mathbf{v}$ where,
\begin{equation}
[\mathbf{v}]_j = \frac{1}{\sqrt{mn(m+n)}}\begin{cases}
n, & \mbox{ if } 1\leq j \leq m\\
-m, & \mbox{ if } m+1 \leq j \leq m+n.\\
\end{cases} 
\end{equation}
Applying proposition \ref{prop: nonde}, we see that
\begin{align}
\lambda^\prime & = \frac{m+n}{\ell} -\frac{\epsilon}{\ell^2} \mathbf{v}\cdot\mathbf{Q}\mathbf{v} + \mathcal{O}(\epsilon^2) \\
&=  \frac{m+n}{\ell} - \frac{\epsilon}{\ell^2}\frac{m+n}{mn} + \mathcal{O}(\epsilon^2) \ ,
\end{align}
is an eigenvalue of $\mR'$.

Similarly, the eigenvalue $\lambda=0$ has normalized eigenvector $\mathbf{u}$ with,
\begin{equation}
[\mathbf{u}]_j = \frac{1}{\sqrt{m+n}} \qquad \mbox{for } j=1, 2, \ldots, m+n.
\end{equation}
Since $[\mathbf{Q}\mathbf{u}]_j = 0$,  we see that $\lambda^\prime = 0 + \mathcal{O}(\epsilon^2)$ is another eigenvalue of $\mR'$ using proposition \ref{prop: nonde}.

%\subsection{Degenerate Eigenvalues}
%\label{sec: bdeg}

We now consider the eigenvalue $\ell^{-1}n$ with $m-1$ orthonormal eigenvectors $\{\mathbf{v}_k\}_{k=1}^{m-1}$ where,
\begin{equation}
[\mathbf{v}_k]_j = \frac{1}{\sqrt{k(k+1)}}\begin{cases}
1, & \mbox{for } j\leq k\\
-k, & \mbox{for } j = k+1\\
0, & \mbox{for } j > k+1.
\end{cases} 
\end{equation}
Then,
\begin{equation}\label{eq: tilde q}
[\mQt]_{j,k} = \mathbf{v}_j\cdot \mathbf{Q}\mathbf{v}_k = \frac{1}{\sqrt{j(j+1)k(k+1)}} \ .
\end{equation}
\begin{lemma}\label{thm: eigenvalues Q}
	The eigenvalues of $\mQt$ are $0$ (with multiplicity $m-2$) and $\frac{m-1}{m}$.
\end{lemma}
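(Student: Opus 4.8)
The plan is to exploit the separable (outer-product) form of the entries of $\mQt$. Reading off \eqref{eq: tilde q}, the $(j,k)$ entry factors as
\[
[\mQt]_{j,k} = \frac{1}{\sqrt{j(j+1)}}\cdot\frac{1}{\sqrt{k(k+1)}} \ .
\]
Hence, if I introduce the vector $\mathbf{w}\in\mathbb{R}^{m-1}$ with entries $[\mathbf{w}]_j = 1/\sqrt{j(j+1)}$, then $\mQt = \mathbf{w}\mathbf{w}^{T}$. The key observation is therefore that $\mQt$ is a \emph{rank-one} symmetric matrix.

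The spectrum of such a matrix is immediate. Since $\mQt = \mathbf{w}\mathbf{w}^{T}$ with $\mathbf{w}\neq\mathbf{0}$, the vector $\mathbf{w}$ is itself an eigenvector with eigenvalue $\mathbf{w}\cdot\mathbf{w}=\|\mathbf{w}\|^2$, while every vector orthogonal to $\mathbf{w}$ is annihilated. As $\mQt$ is an $(m-1)\times(m-1)$ matrix, this yields the eigenvalue $0$ with multiplicity $m-2$ together with a single nonzero eigenvalue equal to $\|\mathbf{w}\|^2$. The final step is to evaluate this norm by a telescoping sum,
\[
\|\mathbf{w}\|^2 = \sum_{j=1}^{m-1}\frac{1}{j(j+1)} = \sum_{j=1}^{m-1}\left(\frac{1}{j}-\frac{1}{j+1}\right) = 1-\frac{1}{m} = \frac{m-1}{m} \ ,
\]
which is exactly the claimed nonzero eigenvalue.

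There is no substantive obstacle in this argument; everything follows once the rank-one structure concealed in \eqref{eq: tilde q} is recognized, after which the eigenvalues are read off and the one remaining computation is an elementary telescoping series. The only point deserving a line of care is verifying that $\mathbf{w}\neq\mathbf{0}$, so that the nonzero eigenvalue is genuinely present rather than collapsing into the kernel; this is clear since each entry $1/\sqrt{j(j+1)}$ is strictly positive.
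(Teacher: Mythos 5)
Your proof is correct, and it takes a cleaner route than the paper's. The paper proves the lemma by explicit construction and verification: it exhibits $m-2$ kernel vectors $\{\mathbf{u}_k\}_{k=2}^{m-1}$ (each supported on two coordinates) and checks $\mQt\mathbf{u}_k=\mathbf{0}$ entrywise, then guesses the vector $\mathbf{w}$ with $[\mathbf{w}]_j = 1/\sqrt{j(j+1)/2}$ and verifies $\mQt\mathbf{w} = (1-1/m)\mathbf{w}$ using the same telescoping sum $\sum_{k=1}^{m-1}\frac{1}{k(k+1)} = 1-\frac{1}{m}$ that you use. Your argument instead identifies the structural reason behind all of this: since $[\mQt]_{j,k}$ factors as a product of a function of $j$ and the same function of $k$, the matrix is the rank-one projector-like object $\mathbf{w}\mathbf{w}^T$, so the entire spectrum — the $(m-2)$-dimensional kernel and the single nonzero eigenvalue $\|\mathbf{w}\|^2$ — is forced, with no eigenvectors to guess and nothing to verify beyond one telescoping sum. (Your $\mathbf{w}$ is the paper's eigenvector up to the scaling $\sqrt{2}$, so the two proofs find the same nonzero eigenspace.) What the rank-one observation buys is both brevity and robustness: it makes clear that any matrix of the form $\mathbf{v}_j\cdot\mQ\mathbf{v}_k$ arising from a rank-one perturbation $\mQ$ restricted to a degenerate eigenspace will have at most one nonzero eigenvalue, which is precisely the pattern exploited repeatedly in the paper's appendices (e.g.\ the eigenvalue $\ell^{-1}m$ block and the complete-graph case); the paper's hands-on verification obscures this, though it has the minor virtue of producing the kernel vectors explicitly. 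Your closing remark that $\mathbf{w}\neq\mathbf{0}$ is the right point of care, and it is indeed immediate.
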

\begin{proof}
	Consider the vectors $\{\mathbf{u}_k\}_{k=2}^{m-1}$ of length $m-1$ such that
	\begin{equation}
	[\mathbf{u}_k]_j = \begin{cases}
	-\frac{1}{\sqrt{k(k+1)/2}}, & j=1\\
	1, & j=k\\
	0, &\mbox{otherwise}.
	\end{cases}
	\end{equation}
	Multiplying $\mathbf{u}_k$ by row $j$ of $\mQt$, we have
	\begin{equation}
	[\mQt\mathbf{u}_k]_j %= \mQt_{j,1}(\mathbf{u}_k)_1 + \mQt_{j,k}(\mathbf{u}_k)_k 
	=-\frac{1}{\sqrt{j(j+1)k(k+1)}} + \frac{1}{\sqrt{j(j+1)k(k+1)}} = 0 \ .
	\end{equation}
	Hence zero is an eigenvalue of $\mQt$ with multiplicity $m-2$.
	
	Similarly, define $\mathbf{w}$ such that
	\begin{equation}
	[\mathbf{w}]_j = \frac{1}{\sqrt{j(j+1)/2}} \qquad \mbox{for } j=1, \ldots, m-1.
	\end{equation}
	Then,
	\begin{align}
	[\mQt\mathbf{w}]_j &%= \sum_{k=1}^{m-1} [\mQt]_{j,k}[\mathbf{w}]_j =\sum_{k=1}^{m-1} \frac{1}{\sqrt{j(j+1))/2}k(k+1)}
	 = \frac{1}{\sqrt{j(j+1)/2}}\sum_{k=1}^{m-1} \frac{1}{k(k+1)}%\nonumber\\
	%& = [\mathbf{w}]_j\sum_{k=1}^{m-1}\left(\frac{1}{k} - \frac{1}{k+1}\right) 
	= \left(1-\frac{1}{m}\right)[\mathbf{w}]_j \ .
	\end{align}
	Hence, $1-1/m$ is the remaining eigenvalue of  $\mQt$.
\end{proof}

%By proposition \ref{prop: degen}, $m-1$ eigenvalues of the matrix $\mRt$ are
%\[\lambda_j^\prime=\frac{n}{\ell} -\frac{\epsilon}{\ell^2}\lambda_j + \mathcal{O}(\epsilon^2)\]
%where $\lambda_j$ is an eigenvalue of the $(m-1) \times (m-1)$ matrix $\mQt$. 

Similarly, $\ell^{-1}m$ is an eigenvalue of $\mR$ with multiplicity $n-1$ and the eigenvalues of a corresponding matrix $\mQt$ are 
$0$ (with multiplicity $n-2$) and $1-1/n$.
Combining the results in section \ref{sec: bnondeg}  with proposition  \ref{prop: degen} produces proposition \ref{thm: eigenvalues of R}.

\subsection{Spectral Determinant}
\label{sec: SD}

The spectral determinant of $\mRt$ is the product of its non-zero eigenvalues.  From proposition \ref{thm: eigenvalues of R}, for an almost equilateral complete bipartite graph,
\begin{align}\label{eq: det R}
\det{'}(\mRt) %&= \prod_{j=2}^{m+n} \lambda_j^\prime = \left(\frac{n}{\ell}\right)^{m-2}\left(\frac{m}{\ell}\right)^{n-2} \left(\frac{m+n}{\ell} - \frac{\epsilon}{\ell^2}\frac{m+n}{mn}\right)\left(\frac{n}{\ell}-\frac{\epsilon}{\ell^2}\frac{m-1}{m}\right)\left(\frac{m}{\ell}-\frac{\epsilon}{\ell^2}\frac{n-1}{n}\right) +\mathcal{O}(\epsilon^2) \nonumber\\
%&= \left(\frac{n}{\ell}\right)^{m-1}\left(\frac{m}{\ell}\right)^{n-1} \left(\frac{m+n}{\ell}\right)-\frac{\epsilon}{\ell^4}\left(\frac{n}{\ell}\right)^{m-2}\left(\frac{m}{\ell}\right)^{n-2} (m+n)(1+ (m-1) + (n-1))  +\mathcal{O}(\epsilon^2)\nonumber\\
&=\det{'}(\mR) - \frac{\epsilon}{\ell^{m+n}} n^{m-2}m^{n-2}(m+n)(m+n-1)+ \mathcal{O}(\epsilon^2) \ .
\end{align}
Therefore, using equation (\ref{eq: det CB}) and theorem \ref{thm: Fr} with $E=mn$ and $V=m+n$, the spectral determinant of the almost equilateral complete bipartite graph $K_{m,n}$ is,
\begin{align}
\det{'}(\mathcal{L}^\prime) 
%&= \frac{2^E}{V}\frac{\prod_{e\in\mathcal{E}} \ell_e}{\prod_{v\in\mathcal{V}} d_v}\det{'}(\mRt)\nonumber\\
 %&= \frac{2^{mn}}{m+n}\frac{\ell^{mn-1}(\ell+\epsilon)}{m^nn^m}\left(\det{'}(\mR) - \frac{\epsilon}{\ell^{m+n}} n^{m-2}m^{n-2}(m+n)(m+n-1)+ \mathcal{O}(\epsilon^2)\right)\nonumber\\
 %&= \frac{(2\ell)^{mn}}{mn\ell^{m+n}} \left(\ell + \epsilon\left(1-\frac{m+n-1}{mn}\right)\right)+\mathcal{O}(\epsilon^2)\\
 & = 2^{E}\ell^{\beta+1} \left(1 + \frac{\epsilon}{\ell} \left(1-\frac{V-2}{E}\right)\right)+\mathcal{O}(\epsilon^2) \  \label{eq: aeL cb}\\
 & =2^E \ell^{\beta+1} \left(  1+\frac{\epsilon(\beta+1)}{\ell E} \right) +\mathcal{O}(\epsilon^2) \ ,
\end{align}
which is \eqref{eq: CB SD} in section \ref{sec: transitive graphs}.

%\subsubsection{Star Graph}
As an explicit example, a \textit{star graph} with $n+1$ vertices and $E=V-1=n$ edges is the complete bipartite graph $K_{n,1}$. Suppose this is an equilateral graph with edge length $\ell$. The eigenvalues of $\mR$ are $\ell^{-1}$ (with multiplicity $n-1$), $0$ and $\ell^{-1}(n+1)$. 
Then, using \eqref{eq: det R} and perturbing one edge,
\begin{equation}\label{eq: detstar}
\det{'}(\mRt) %= \frac{n+1}{\ell^n} - \epsilon\frac{n+1}{\ell^{n+1}} + \mathcal{O}(\epsilon^2) 
= \frac{V}{\ell^{V-1}} - \epsilon\frac{V}{\ell^V} + \mathcal{O}(\epsilon^2) \ .
\end{equation}
From (\ref{eq: aeL cb}), 
\begin{equation}
\det{'}(\mathcal{L}^\prime) = 2^{E}\ell\left(1 +\frac{\epsilon}{\ell E}\right) +  \mathcal{O}(\epsilon^2) \ .
%= 2^{n}\ell + \frac{2^n\epsilon}{n} +  \mathcal{O}(\epsilon^2)\ .
\end{equation}

Since a star graph is a tree, it has only one spanning tree. Suppose that  $\Gae$ is an almost equilateral star graph with $n+1$ vertices. For a star graph, using \eqref{eq: T formula with R} and \eqref{eq: detstar}, 
\begin{equation}
T_{\Gae} 
= 1 + \frac{\epsilon}{\ell n} + \mathcal{O}(\epsilon^2) \ ,
\end{equation}
which agrees with  \eqref{eq: bipartite tree} when $m=1$.
Consequently, for metric star graphs, we might expect $[T_{\Gamma}] =1 $ provided the edge lengths fall in a window $[\ell, \ell+\delta)$ for $\delta \lesssim \ell /2$, a much wider window than that found in theorem \ref{thm:generic spanning trees}.

\end{document}